\newtheorem{definition}{Definition}
\newtheorem{remark}{Remark}
\newtheorem{corollary}{Corollary}
\newtheorem{theorem}{Theorem}
\def \eg{\emph{e.g.}}
\def \ie{\emph{i.e.}}
\newcommand{\R}{\mathbb R}            % real numbers
\DeclareMathOperator{\diag}{diag}
\def \1{\mathbb{1}}
\def \grr{\textsf{RandomizedResponse}} 
\def \mss{\textsf{ModularSubsetSelection}} 
\def \pgr{\textsf{ProjectiveGeometryResponse}}
\def \rappor{\textsf{RAPPOR}}
\def \oue{\textsf{OptimalUnaryEncoding}}
\def \ss{\textsf{SubsetSelection}}
\def \rappor{\textsf{RAPPOR}}
\title{Private Frequency Estimation Via Residue Number Systems}
\author {
    % Authors
    Héber H. Arcolezi
}
\begin{document}
\sloppy

\maketitle

\begin{abstract}
We present \textsf{ModularSubsetSelection} (MSS), a new algorithm for locally differentially private (LDP) frequency estimation. 
Given a universe of size $k$ and $n$ users, our $\varepsilon$-LDP mechanism encodes each input via a Residue Number System (RNS) over $\ell$ pairwise-coprime moduli $m_0, \ldots, m_{\ell-1}$, and reports a randomly chosen index $j \in [\ell]$ along with the perturbed residue using the statistically optimal \textsf{SubsetSelection} (SS)~\cite{wang2016mutual}.
This design reduces the user communication cost from $\Theta\bigl(\omega \log_2(k/\omega)\bigr)$ bits required by standard SS (with $\omega \approx k/(e^\varepsilon+1)$) down to $\lceil \log_2 \ell \rceil + \lceil \log_2 m_j \rceil$ bits, where $m_j < k$.
Server-side decoding runs in $\Theta(n + r k \ell)$ time, where $r$ is the number of LSMR~\cite{fong2011lsmr} iterations. 
In practice, with well-conditioned moduli (\textit{i.e.}, constant $r$ and $\ell = \Theta(\log k)$), this becomes $\Theta(n + k \log k)$. 
We prove that MSS achieves worst-case MSE within a constant factor of state-of-the-art protocols such as SS and \textsf{ProjectiveGeometryResponse} (PGR)~\cite{feldman2022}, while avoiding the algebraic prerequisites and dynamic-programming decoder required by PGR.
Empirically, MSS matches the estimation accuracy of SS, PGR, and \textsf{RAPPOR}~\cite{rappor} across realistic $(k, \varepsilon)$ settings, while offering faster decoding than PGR and shorter user messages than SS.
Lastly, by sampling from multiple moduli and reporting only a single perturbed residue, MSS achieves the lowest reconstruction-attack success rate among all evaluated LDP protocols. 
\end{abstract}

% Uncomment the following to link to your code, datasets, an extended version or similar.
% You must keep this block between (not within) the abstract and the main body of the paper.
\begin{links}
    \link{Code}{https://github.com/hharcolezi/private-frequency-oracle-rns}
    % \link{Extended version}{https://arxiv.org/abs/2511.11569}
\end{links}

\section{Introduction}
\label{sec:intro}

Today’s \emph{federated} applications span billions of devices, such as keyboard prediction by Apple~\cite{apple} and Gboard~\cite{gboard}, and telemetry systems in Google Chrome~\cite{rappor} and Microsoft operating systems~\cite{microsoft}, all of which must learn from data that never leaves the user’s device in raw form.
The prevailing formalism is the \emph{local model} of differential privacy (LDP)~\cite{first_ldp,Duchi2013}: each user applies a randomizer \(\mathcal M\) to their datum \(x\in\mathcal X\) and sends only the noisy message \(Y=\mathcal M(x)\) to an \emph{untrusted} aggregator.  
A mechanism \(\mathcal M\!:\!\mathcal X\!\to\!\mathcal Y\) is \(\varepsilon\)\,-\,LDP if for every measurable \(S\subseteq\mathcal Y\) and every \(x,x'\in\mathcal X\),
\[
  \Pr[\mathcal M(x)\in S]\;\le\;e^{\varepsilon}\Pr[\mathcal M(x')\in S].
\]
Under LDP \emph{no single report} can distinguish two inputs by a factor larger than \(e^{\varepsilon}\).
Once the locally obfuscated reports arrive, the server aims to perform global tasks such as statistical estimation or model training.  
Four main factors determine the practicality of any local-DP protocol:

\begin{enumerate}[label=(\roman*)]
  \item \textbf{Utility}: the accuracy with which the server can complete its task.
  \item \textbf{Communication}: the number of bits each user must transmit per report.
  \item \textbf{Server runtime}: the time and memory required for server-side decoding.
  \item \textbf{Attackability}: the probability that an adversary correctly recovers an individual’s input from a single report.
\end{enumerate}

Together, these four dimensions form a \emph{multi-constraint regime}: in large-scale telemetry and federated analytics, client bandwidth, server compute, statistical accuracy, and privacy risk may each become the dominant constraint depending on the deployment.

\paragraph{Problem Statement.} This work addresses this \emph{multi-constraint} challenge when designing efficient mechanisms for \emph{federated-analytics} deployments that require locally differentially private \textit{\textbf{frequency estimation}} over a finite domain $[k] = \{0, \dots, k-1\}$.  
In this setting, each user holds a private input $x_i$ and the goal is for an untrusted server to recover an accurate estimate of the population histogram $\mathbf{f} \in \R^k$, where $f_v = \tfrac{\#\{i : x_i = v\}}{n}$.  
After collecting $n$ randomized reports $\{Y_i\}_{i=1}^n$, the server computes an estimate $\hat{\mathbf{f}}$ aiming to minimize its distance from $\mathbf{f}$ under some norm $\lVert \mathbf{f} - \hat{\mathbf{f}} \rVert$. 
In line with prior literature~\cite{feldman2022,kairouz2016discrete,tianhao2017,Hadamard}, we quantify estimation error using the expected $\ell_2$ norm, and focus on the mean squared error (MSE) metric: \(\mathrm{MSE}\;=\;\frac1k\,\operatorname E\bigl[\lVert\hat{\mathbf f}-\mathbf f\rVert_2^{2}\bigr]\).

In addition to utility, another fundamental concern in the local DP model is \emph{\textbf{attackability}}: the ability of a Bayesian adversary to reconstruct a user’s true input $x$ from a single obfuscated message $Y$~\cite{Gursoy2022,arcolezi2025revisiting}.  
This threat, commonly referred to as a \emph{Data Reconstruction Attack (DRA)} in the AI and ML communities~\cite{geiping2020inverting,hayes2023bounding,Guerra2024}, is quantified as the probability that an adversary with full knowledge of the protocol and prior distribution correctly guesses $x$ given $Y$. 
Protocols that minimize estimation error while keeping reconstruction rate low provide stronger privacy in practice.

\begin{table*}[t]
  \centering
  \resizebox{\linewidth}{!}{%
  \begin{tabular}{@{}lcccc@{}}
    \toprule
    \textbf{LDP frequency-oracle} &
      \textbf{Communication (bits)} &
      \textbf{MSE (worst-case)} &
      \textbf{Server decoding time} &
      \textbf{Attackability (DRA)} \\[8pt]
    \midrule
    \grr~(GRR)~\cite{kairouz2016discrete}
      & $\lceil\log_2 k\rceil$
      & $\dfrac{e^{\varepsilon}+k-2}{n\,(e^{\varepsilon}-1)^2}$
      & $O(n+k)$
      & $\dfrac{e^{\varepsilon}}{e^{\varepsilon}+k-1}$ \\[8pt]

    \rappor{}~\cite{rappor,tianhao2017}
      & $k$
      & $\dfrac{4\,e^{\varepsilon}}{n\,(e^{\varepsilon}-1)^2}$
      & $O(nk)$
      & $\frac{1}{k}\left[
        e^{\varepsilon/2}
        -
        \frac{e^{(k-1)\varepsilon/2}\bigl(e^{\varepsilon/2}-1\bigr)}
             {\bigl(e^{\varepsilon/2}+1\bigr)^{k-1}}
        \right]$ \\ [8pt]

    \ss~(SS)~\cite{wang2016mutual}
      & $\Bigl\lceil\log_{2}\binom{k}{\omega}\Bigr\rceil$
      & $\dfrac{4\,e^{\varepsilon}}{n\,(e^{\varepsilon}-1)^2}$
      & $O(n\omega+k)$
      & $\dfrac{e^{\varepsilon}}{\omega e^{\varepsilon}+k-\omega}$ \\ [8pt]

    \pgr~(PGR)~\cite{feldman2022}
      & $\lceil\log_2 k\rceil$
      & $\dfrac{4\,e^{\varepsilon}}{n\,(e^{\varepsilon}-1)^2}$
      & $O\!\bigl(n+k\,e^{\varepsilon}\log k\bigr)$
      & $\displaystyle \frac{e^{\varepsilon}}{K + (e^{\varepsilon} - 1)c_{\mathrm{set}}}$ \\[8pt]

    \textbf{\mss} (this work)
      & $\displaystyle
         \lceil\log_2\ell\rceil+
         \frac{1}{\ell}\sum_{j=0}^{\ell-1}\!
         \Bigl\lceil\log_{2}\binom{m_j}{\;\omega_j}\Bigr\rceil$
      & $\displaystyle
         \frac{4\,\kappa\,e^{\varepsilon}}
              {n\,(e^{\varepsilon}-1)^2}$
      & $O\!\bigl(n + k\ell + \sum_{j=0}^{\ell-1} m_j\bigr)$
      & $\displaystyle
            \frac{1}{\ell k} \sum_{j=0}^{\ell - 1}
            \frac{m_j \cdot e^{\varepsilon}}{\omega_j \cdot e^{\varepsilon} + m_j - \omega_j}$\\[8pt]
    \bottomrule
  \end{tabular}}
  \caption{Comparison of single-message LDP frequency-estimation schemes.
        Communication is the number of bits per user.
        MSE is the worst-case mean-squared error of the unbiased estimator; server time is leading-order in users $n$ and domain size $k$.
        DRA is the Bayesian single-message attacker success rate.
        $\omega=\lfloor k/(e^{\varepsilon}+1)\rceil$ is the SS subset size, and $\omega_j$ is the analogous size for modulus $m_j$ in MSS.
        For PGR, the DRA expression shown applies when the domain equals the natural projective size $K=(q^{t}-1)/(q-1)$; the exact DRA for truncated domains ($k < K$) is provided in 
        % Appendix C of~\cite{full_version}.
        Appendix~\ref{app:attack_pgr}.
        }
  \label{tab:ldp_summary}
\end{table*}

\paragraph{Related work.} Table~\ref{tab:ldp_summary} summarizes the trade-offs across utility, communication, computation, and attackability of state-of-the-art LDP frequency estimation protocols.
Classical \grr~\cite{Warner1965,kairouz2016discrete}  minimizes
per-user bandwidth (one $\lceil\log_{2}k\rceil$-bit symbol) but suffers an
$\Theta(k/e^{\varepsilon})$ gap to the information-theoretic MSE bound and
yields the highest single-message reconstruction success rate.
The \ss~(SS) mechanism~\cite{wang2016mutual} attains the optimal worst-case MSE by returning a random subset containing the true value. 
However, this comes with $\Theta\!\bigl(\omega\log_{2}(k/\omega)\bigr)$ bits of communication per user and high server cost.  
Bit-vector schemes like \rappor~\cite{rappor} and \oue~(OUE)~\cite{tianhao2017} reach the same optimal bound by perturbing $k$-length binary encodings, but this increases both message size ($O(k)$) and server time ($O(nk)$).  
Most recently, the coding-based \pgr{} protocol~\cite{feldman2022} demonstrates that algebraic structure can enable near-optimal utility with reduced communication cost as $\lceil \log_2 k \rceil$. 
However, its deployment remains nontrivial: PGR requires the domain size to match a projective geometry constraint, relies on finite fields of size near $e^{\varepsilon}$, and uses dynamic programming for decoding.

\paragraph{Our Contributions.} 
We propose \mss{} (MSS), a novel single-message $\varepsilon$-LDP protocol that tackles the accuracy-bandwidth-computation-attackability four-way trade-off through a modular \emph{``divide \& conquer''} design based on \emph{Residue Number System} (RNS)~\cite{Szabó1967}. 
Each input \(x \in [k]\) is first mapped to a short RNS vector \(\bigl(x \bmod m_0, \dots, x \bmod m_{\ell-1}\bigr)\) using a set of pairwise-coprime moduli \((m_0,\dots,m_{\ell-1})\); by the Chinese Remainder Theorem (CRT), 
\(\prod_{j=0}^{\ell-1} m_j \ge k\) ensures that the mapping is injective over $[k]$.  
Instead of transmitting the full residue vector, each user samples \emph{one} block index uniformly at random and perturbs its coordinate with \ss{} at privacy level $\varepsilon$.  
This ``divide'' step reduces the message alphabet from $k$ to at most $\max_j m_j < k$, so the report fits into \(\lceil\log_2 \ell\rceil + \lceil\log_2 m_j\rceil\) bits. 
\emph{On the user side, this requires nontrivial CRT-based design choices to maintain 
injectivity, full rank, and a favorable $\ell$-$m_j$ trade-off.}

On the server side, MSS ``conquers'' the estimation error via a variance-weighted least-squares solver on a sparse design matrix $A_{w}$.  
For well-conditioned moduli, the total decoding cost is $O(n+k\ell)$ (empirically $O(n+k\log k)$).
Theoretically, the worst-case mean-squared error satisfies 
\(
\mathrm{MSE}_{\text{MSS}}
\;\le\;
\kappa\,
\mathrm{MSE}_{\text{SS}},
\quad \kappa=\operatorname{cond}(A_{w}),
\)
and our modulus search keeps $\kappa \leq 10$.  
\emph{The server-side challenges include controlling $\kappa$ to guarantee low MSE, 
designing a variance-optimal unbiased decoder, and selecting moduli that balance accuracy and computational cost.}
In practice, the \emph{empirical} ratio $\mathrm{MSE}_{\text{MSS}} / \mathrm{MSE}_{\text{SS}}$ never exceeded $\kappa \approx 1.3$ across all $(k, \varepsilon)$ we tested, indicating only a small constant-factor overhead.
Lastly, MSS reduces a Bayesian attacker’s single-report reconstruction success by increasing uncertainty over the domain, outperforming \ss{} and \grr{} in our experiments.

\paragraph{Comparison with \pgr.}
PGR~\cite{feldman2022} attains the information-theoretic variance bound of SS but at the cost of finite-field arithmetic, rigid domain constraints, and a dynamic-programming decoder with $O\!\bigl(n+k\,e^{\varepsilon}\log k\bigr)$ states.  
MSS eliminates these algebraic prerequisites: it accepts \emph{arbitrary} $k$ and $\varepsilon$, relies solely on integer arithmetic, and replaces combinatorial decoding by a single sparse least-squares solve amenable to out-of-core and parallel settings.  
Empirically, MSS matches or approximates the utility loss of PGR (and SS) while requiring much less server-side runtime; moreover, the tunable parameter $\ell$ lets practitioners navigate the full communication-accuracy spectrum, a flexibility unavailable in PGR.  
Therefore, MSS offers a lower-complexity, more adaptable alternative without sacrificing practical accuracy.

\section{Preliminaries}
\label{sec:background}

Our MSS combines ideas from number theory with tools from linear algebra. 
We review the background below.

\begin{definition}[Residue Number System (RNS)~\cite{Szabó1967}] \label{def:rns}
Let $\mathcal{X} = \{0, \dots, k-1\}$ be the finite input domain.
Given a set of pairwise-coprime integers $\{m_0, \dots, m_{\ell-1}\}$, called \emph{moduli}, the Residue Number System represents each $x \in \mathcal{X}$ by its residues modulo each $m_j$:
\[
   \mathbf{r}(x) = \bigl(x \bmod m_0,\; \dots,\; x \bmod m_{\ell-1}\bigr).
\]
By the Chinese Remainder Theorem, if $\prod_{j=0}^{\ell-1} m_j \ge k$, this representation is injective and fully encodes the domain.
\end{definition}

\paragraph{Weighted least-squares estimators.}
Consider noisy linear measurements \(y=Ax+\epsilon\), where \( A\in\R^{M\times k} \) is a design matrix and \(\epsilon\) has row-wise variances \((w_{1}^{-1},\dots,w_{M}^{-1})\).
The \emph{generalised least-squares} (GLS) estimator solves
\[
  \hat x
  \;=\;
  \arg\min_{z}\;\bigl\|W^{1/2}(Az-y)\bigr\|_{2}^{2}
  \;=\;
  (A^{\!\top}WA+\lambda I)^{-1}A^{\!\top}Wy,
\]
with weight matrix \(W=\operatorname{diag}(w_{1},\dots,w_{M})\) and optional ridge parameter \(\lambda\ge0\)~\cite{hastie2009elements}.

\paragraph{Spectral condition number.}
For any real matrix \(B\) let \(\sigma_{\max}(B)\) and \(\sigma_{\min}(B)\) denote its largest and smallest singular values. 
The \emph{spectral condition number} is
\[
  \kappa=\operatorname{cond}(B)
  \;=\;
  \frac{\sigma_{\max}(B)}{\sigma_{\min}(B)}
  \;\in\;[1,\infty).
\]
Smaller values imply greater numerical stability.
In our analysis, we write \(\kappa = \operatorname{cond}(A_w)\) for the weighted design matrix \(A_w = W^{1/2} A\).

\paragraph{Iterative solution.}
When \(A\) is large and sparse (for large domain sizes $k$), we solve the GLS normal equations with the Lanczos-based \textsc{LSMR} algorithm~\cite{fong2011lsmr}, whose cost is \(O\!\bigl(r\,\mathrm{nnz}(A)\bigr)\) where \(r\) is the iteration count to convergence and ``nnz'' is a shorthand for the number of non-zero entries in a matrix.

\section{Modular Subset Selection}
\label{sec:mss}

Following the \emph{divide \& conquer} view from Section~\ref{sec:intro},
Section~\ref{sub:mss_client} covers the user-side (\emph{divide}) mechanism,
and Section~\ref{sub:mss_estimation} the server-side (\emph{conquer}) estimation.

\subsection{User-Side (``Divide'') Obfuscation}
\label{sub:mss_client}

\mss{} is a single-message $\varepsilon$-LDP mechanism for frequency estimation 
over the domain $\mathcal{X}=\{0,\dots,k-1\}$. 
Building on the residue number system (Definition~\ref{def:rns}), each input $x$ is 
represented by its $\ell$ residues modulo a set of pairwise-coprime moduli. 
Rather than perturbing all residues with a split privacy budget, MSS samples and 
reports only a \emph{single} coordinate $J\in[\ell]$, using the full privacy budget 
$\varepsilon$ for that coordinate.  
This modular sampling aligns with established LDP approaches for multidimensional 
data~\cite{tianhao2017,Arcolezi2023} and yields strong privacy, low communication 
cost, and efficient server-side recovery.

Concretely, each user holding a private value $x$ proceeds by selecting one modulus $m_J$ 
uniformly at random, computing the residue $r = x \bmod m_J$, and applying 
\ss{} with privacy level~$\varepsilon$ over the domain $[m_J]$.  
The resulting report consists of the block index $J$ and a noisy subset 
$Z \subseteq [m_J]$ of fixed size $\omega_J$.  
The full procedure is given in Algorithm~\ref{alg:mss-client}.

\begin{theorem}[Privacy of MSS]
\mss{} in Algorithm~\ref{alg:mss-client} satisfies $\varepsilon$-local differential privacy.
\end{theorem}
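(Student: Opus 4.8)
The plan is to exploit the fact that the mechanism factorizes into a data-independent block selection followed by an inner $\varepsilon$-LDP mechanism. First I would describe a generic output of MSS as a pair $(j, Z)$, where $j \in [\ell]$ is the realized block index and $Z \subseteq [m_j]$ is the noisy subset of size $\omega_j$ returned by \ss{}. Since the index $J$ is drawn uniformly at random, independently of the private input $x$, we have $\Pr[J = j] = 1/\ell$ for every $x$ and every $j$. Conditioned on $J = j$, the report $Z$ is exactly the output of \ss{} applied to the residue $x \bmod m_j$ over the domain $[m_j]$ at privacy level $\varepsilon$; I would write $p_j(Z \mid r)$ for this conditional probability when the residue equals $r$.

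Next I would compute the likelihood ratio for two arbitrary inputs $x, x' \in [k]$ and any output $(j, Z)$. Using independence of $J$ from the input, the joint probability factors as
\[
  \Pr[\mathcal{M}(x) = (j, Z)] = \frac{1}{\ell}\, p_j\!\bigl(Z \mid x \bmod m_j\bigr),
\]
and identically for $x'$. The uniform factor $1/\ell$ cancels in the ratio, leaving
\[
  \frac{\Pr[\mathcal{M}(x) = (j, Z)]}{\Pr[\mathcal{M}(x') = (j, Z)]}
  = \frac{p_j\!\bigl(Z \mid x \bmod m_j\bigr)}{p_j\!\bigl(Z \mid x' \bmod m_j\bigr)}.
\]
Only outputs with $Z \subseteq [m_j]$, $\lvert Z\rvert = \omega_j$ need be considered, and on these both probabilities are strictly positive, so the ratio is well defined.

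Finally I would invoke the established $\varepsilon$-LDP guarantee of \ss{}~\cite{wang2016mutual}: for any two residues $r, r' \in [m_j]$ and any admissible $Z$, the ratio $p_j(Z \mid r) / p_j(Z \mid r')$ is at most $e^{\varepsilon}$. Applying this with $r = x \bmod m_j$ and $r' = x' \bmod m_j$ bounds the overall ratio above by $e^{\varepsilon}$, uniformly over all $(j, Z)$, which is precisely the $\varepsilon$-LDP condition. The conceptual crux—the one step that must be argued rather than merely computed—is the observation that \emph{no privacy budget is split across coordinates}: because only the single sampled coordinate is released and the sampling of $J$ is data-oblivious, there is no sequential composition to account for, and the full budget $\varepsilon$ is spent on a single \ss{} invocation. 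A naive analysis that summed per-coordinate budgets, or a variant in which $J$ were allowed to depend on $x$, would destroy this clean reduction to the inner mechanism's guarantee; the data-independence of $J$ is exactly what makes the cancellation of the $1/\ell$ factor legitimate.
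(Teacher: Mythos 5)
Your proposal is correct and follows essentially the same argument as the paper's proof: factor the output distribution into the data-independent uniform choice of $J$ (whose $1/\ell$ factor cancels in the likelihood ratio) and a single invocation of \ss{} at full budget $\varepsilon$ on the residues $x \bmod m_j$ and $x' \bmod m_j$, then invoke the $\varepsilon$-LDP guarantee of \ss{}. Your added remarks on positivity of the conditional probabilities and on why no budget splitting occurs are sound elaborations of the same reasoning.
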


\begin{proof}
Fix any $x,x' \in \mathcal{X}$ and any possible output $(j,Z)$.
The output of \mss{} consists of two components: (i) a uniformly sampled block index $J \in [\ell]$, and (ii) a perturbed residue set $Z \subseteq [m_j]$ of fixed size $\omega_j$ generated by \ss{}.

By construction,
\[
\Pr[\text{MSS}(x) = (j, Z)] = \Pr[J = j] \cdot \Pr[Z \mid J = j, x].
\]
Since $J$ is independent of $x$ and uniform over $[\ell]$, it contributes no privacy loss.
It suffices to show that for fixed $j$, the randomizer $\text{SS}_{m_j}$ applied to $x \bmod m_j$ satisfies $\varepsilon$-LDP.

Let $r = x \bmod m_j$ and $r' = x' \bmod m_j$. From the SS mechanism definition~\cite{wang2016mutual}, we have:
\[
\frac{
\Pr[Z \mid r]
}{
\Pr[Z \mid r']
}
\le e^{\varepsilon},
\quad
\forall Z \subseteq [m_j],\; |Z| = \omega_j.
\]
Hence,
\[
\frac{
\Pr[\text{MSS}(x) = (j, Z)]
}{
\Pr[\text{MSS}(x') = (j, Z)]
}
= \frac{1/\ell}{1/\ell} \cdot \frac{\Pr[Z \mid r]}{\Pr[Z \mid r']}
\le 1 \cdot e^{\varepsilon} = e^{\varepsilon}.
\]
Finally, since post-processing does not affect privacy, \mss{} satisfies $\varepsilon$-LDP.
\end{proof}

\begin{algorithm}[t]
% \scriptsize
\caption{$\textsc{UserSideMSS}(x, \, \mathbf{m}, \, \varepsilon)$}
\label{alg:mss-client}
\renewcommand{\baselinestretch}{1.0}\selectfont
\begin{algorithmic}[1]
\Require Private input $x \in \mathcal{X}$, moduli $\mathbf{m}$, privacy budget $\varepsilon$
\Ensure Noisy report $(J,Z)$
\State $\ell \gets |\mathbf{m}|$
\State Draw $J \sim \operatorname{Uniform}([\ell])$ \Comment{Sample modulus index}
\State Set $p_J \gets \frac{\omega_J e^{\varepsilon}}{\omega_J e^{\varepsilon} + m_J - \omega_J}$, where $\omega_j = \lfloor\frac{m_J}{e^{\varepsilon}+1}\rceil$
\State Compute $r \gets x \bmod m_J$
\State Draw $\zeta \sim \operatorname{Uniform}([0,1])$
\If{$\zeta < p_J$}
  \State $Z \gets \{r\} \cup$ random sample of $(\omega_J - 1)$ elements from $[m_J] \setminus \{r\}$
\Else
  \State $Z \gets$ random sample of $\omega_J$ elements from $[m_J] \setminus \{r\}$
\EndIf
\State \Return $(J,Z)$
\end{algorithmic}
\renewcommand{\baselinestretch}{1.0}\selectfont
\end{algorithm}

\subsection{Server-Side (``Conquer") Estimation}
\label{sub:mss_estimation}

Upon receiving the users' reports $y = (J, Z)$, the server's goal is to estimate the empirical distribution $\mathbf{f} \in \R^k$ over $[k]$. 
This is done by first debiasing the noisy SS reports, forming a weighted design matrix that leverages the CRT structure, and then solving a regularized least-squares system.

\subsubsection{Design Matrix}

For each block $j \in [\ell]$, define the mapping matrix $A_j \in \{0,1\}^{m_j \times k}$ such that
\[
  A_j[r, x] = \mathbf{1}\{x \bmod m_j = r\}.
\]
Each row of $A_j$ encodes the indicator vector of domain values that map to residue $r$ under modulus $m_j$.
Stacking all $A_j$ vertically produces the full design matrix:
\[
  A = \begin{bmatrix} A_0 \\ \vdots \\ A_{\ell-1} \end{bmatrix}
  \in \{0,1\}^{T \times k}, \quad T = \sum_{j=0}^{\ell-1} m_j.
\]

\subsubsection{Variance-Optimal Row Weights} 
To reflect the per-block variance from the \ss{} mechanism, we apply optimal variance weights to each row. 
Let $p_j=\frac{\omega_j e^{\varepsilon}}{\omega_j e^{\varepsilon} + m_j - \omega_j}$ and $q_j=\frac{\omega_j e^{\varepsilon} (\omega_j - 1) + (m_j - \omega_j) \omega_j}{(m_j - 1) (\omega_j e^{\varepsilon} + m_j - \omega_j)}$ denote the true and false inclusion probabilities for block $j$.
The marginal probability that a random residue appears in $Z$ is
\[
  \pi_j = q_j + \frac{p_j-q_j}{m_j},
\]
and for $n_j$ reports using block $j$, the variance of each SS estimator
coordinate is
\[
  \sigma_j^2=\frac{\pi_j(1-\pi_j)}{n_j(p_j-q_j)^2}.
\]

Following generalized least squares, we define the square-root weight vector $\mathbf{w}^{1/2} \in \R^T$ such that each entry corresponding to block $j$ is repeated $m_j$ times and equals
\[
  \sqrt{w_{j}} = \frac{p_j - q_j}{\sqrt{\pi_j(1 - \pi_j)/n_j}}.
\]

Let the diagonal scaling matrix:
\[
  W^{1/2} = \diag(\underbrace{\sqrt{w_0}, \dots, \sqrt{w_0}}_{m_0},
                 \dots,
                 \underbrace{\sqrt{w_{\ell-1}}, \dots, \sqrt{w_{\ell-1}}}_{m_{\ell-1}}).
\]
The weighted design matrix is then: 

\[A_w = W^{1/2} A \mathrm{.}\]

\subsubsection{Observation Vector Construction}

For each block $j$, let $c_j \in \R^{m_j}$ be the vector of counts, where $c_j[a]$ is the number of times residue $a \in [m_j]$ appeared in block $j$. 
Let $n_j = \sum_a c_j[a]$, define the empirical probability vector $\bar{y}_j = c_j / n_j$.
Following the unbiased SS estimator~\cite{wang2016mutual}, the debiased per-residue estimate is
\[
  \hat{s}_j = \frac{\bar{y}_j - q_j}{p_j - q_j},
\]
and each coordinate of $\hat{s}_j$ has variance $\sigma_j^2$ from above.

Stacking all blocks yields 
\[
\mathbf{s} = \begin{bmatrix} \hat{s}_0 \\[-2pt] \vdots \\[-2pt] \hat{s}_{\ell-1} \end{bmatrix}.
\]
Since each coordinate has variance $\sigma_j^2$, we apply the standard GLS
reweighting, \ie, scaling each entry by the inverse of its noise standard deviation, 
to obtain the weighted observations
\[
  \tilde{\mathbf{s}} = \mathbf{w}^{1/2} \odot \mathbf{s}.
\]

\subsubsection{Least Squares Estimation}

To estimate the raw frequency vector $\hat{\mathbf{f}} \in \R^k$, we solve:
\[
  \hat{\mathbf{f}} = \arg\min_{\mathbf{z} \in \R^k} \left\| A_w \mathbf{z} - \tilde{\mathbf{s}} \right\|_2^2 + \lambda \|\mathbf{z}\|_2^2,
\]
where $\lambda > 0$ is a small ridge regularization parameter (\eg, $1/\varepsilon^2$) introduced for numerical stability. 
In practice, we solve this system using the \textsc{LSMR} algorithm~\cite{fong2011lsmr}, a Krylov-subspace method that efficiently handles large sparse matrices and avoids explicit inversion. 
When $\lambda=0$ and $A_w$ has full column rank, the solution is:
\begin{equation} \label{eq:estimator}
  \hat{\mathbf{f}} = \left( A_w^\top A_w + \lambda I \right)^{-1} A_w^\top \tilde{\mathbf{s}},  
\end{equation}
but the computation is performed iteratively without forming dense matrices.
This estimator is unbiased in expectation (see Theorem~\ref{thm:unbiased-zero}) as well as asymptotically (see Corollary~\ref{cor:unbiased-lambda}), and its analytical variance is derived in Section~\ref{sub:mse}.

\subsubsection{Unbiasedness Analysis} \label{sub:unbiasedness}
An important property of any frequency oracle is whether its estimates are unbiased. 
For our MSS estimator, we now show that it satisfies exact unbiasedness in the unregularized case ($\lambda = 0$), and is asymptotically unbiased when a small ridge regularization is applied.

\begin{theorem}[Exact unbiasedness, $\lambda=0$]
\label{thm:unbiased-zero}
Let $\mathbf{f} \in \mathbb{R}^k$ be the true input histogram, and suppose the weighted design matrix $A_w \in \mathbb{R}^{T \times k}$ has full column rank.
Then the least-squares estimator
\[
  \hat{\mathbf{f}} = (A_w^\top A_w)^{-1} A_w^\top \tilde{\mathbf{s}}
\]
satisfies
\[
  \mathbb{E}[\hat{\mathbf{f}}] = \mathbf{f}.
\]
\end{theorem}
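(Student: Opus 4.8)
The plan is to show that the estimator $\hat{\mathbf f} = (A_w^\top A_w)^{-1} A_w^\top \tilde{\mathbf s}$ is unbiased by tracing the expectation through each stage of the server-side pipeline. Since $A_w = W^{1/2} A$ and $\tilde{\mathbf s} = \mathbf{w}^{1/2} \odot \mathbf s = W^{1/2}\mathbf s$, the estimator simplifies to $\hat{\mathbf f} = (A^\top W A)^{-1} A^\top W \mathbf s$. Because the weight matrix $W$ is deterministic (it depends only on the public parameters $m_j$, $\omega_j$, $\varepsilon$, and the report counts $n_j$, which we condition on), the only randomness enters through the stacked observation vector $\mathbf s$. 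Hence by linearity of expectation it suffices to prove that $\mathbb{E}[\mathbf s] = A\mathbf f$; the result then follows immediately since $(A^\top W A)^{-1} A^\top W \cdot A\mathbf f = \mathbf f$ by the left-inverse identity, which is well-defined precisely because $A_w$ has full column rank.

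First I would establish the per-block identity $\mathbb{E}[\hat s_j] = A_j \mathbf f$. Recall $\hat s_j = (\bar y_j - q_j)/(p_j - q_j)$, where $\bar y_j = c_j/n_j$ is the empirical residue-frequency vector from the $n_j$ users assigned to block $j$. The key step is computing $\mathbb{E}[\bar y_j[a]]$ for each residue $a \in [m_j]$. A user with true value $x$ maps to residue $x \bmod m_j$, and under the \ss{} mechanism that user's report includes residue $a$ with probability $p_j$ if $a = x \bmod m_j$ and with probability $q_j$ otherwise. Summing over the fraction of block-$j$ users at each domain value, the expected inclusion probability of residue $a$ is $p_j \cdot g_j[a] + q_j\,(1 - g_j[a])$, where $g_j[a] = \sum_{x : x \bmod m_j = a} f_x$ is exactly the $a$-th coordinate of $A_j \mathbf f$. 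Debiasing then gives $\mathbb{E}[\hat s_j[a]] = (p_j g_j[a] + q_j(1 - g_j[a]) - q_j)/(p_j - q_j) = g_j[a]$, so $\mathbb{E}[\hat s_j] = A_j \mathbf f$, matching the standard unbiasedness of the \ss{} estimator lifted through the residue map.

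Stacking the blocks yields $\mathbb{E}[\mathbf s] = \big[\,(A_0\mathbf f)^\top, \dots, (A_{\ell-1}\mathbf f)^\top\,\big]^\top = A\mathbf f$ by the block structure of $A$, which completes the argument. I expect the main subtlety—rather than a genuine obstacle—to be the careful bookkeeping of which users contribute to each block. Because each user independently samples $J \sim \mathrm{Uniform}([\ell])$, the counts $n_j$ are themselves random; the clean statement requires treating the debiasing as conditional on the realized $n_j$ (or equivalently noting that the \ss{} estimator is unbiased for any fixed $n_j \ge 1$, so the tower property removes the conditioning). One should also confirm that the residue-level indicator events are handled correctly when a single user's perturbed set $Z$ contributes multiple residues, but since $\bar y_j$ aggregates inclusion counts linearly and $\mathbb{E}$ distributes over the independent per-user draws, no cross-terms arise. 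The full-column-rank hypothesis is used only at the very end to guarantee the Moore–Penrose left inverse is exact, ensuring no residual bias from the linear solve.
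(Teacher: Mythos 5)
Your proposal is correct and follows essentially the same route as the paper's proof: establish $\mathbb{E}[\tilde{\mathbf{s}}] = A_w\mathbf{f}$ blockwise via the unbiasedness of the debiased SS estimator, then apply the left-inverse identity $(A_w^\top A_w)^{-1}A_w^\top A_w = I$. You simply fill in more detail than the paper (the explicit computation $\mathbb{E}[\bar y_j[a]] = p_j g_{j,a} + q_j(1-g_{j,a})$ and the conditioning on the random counts $n_j$), which the paper leaves implicit by citing the de-biasing procedure.
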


\begin{proof}
Recall that $\tilde{\mathbf{s}} = W^{1/2} \mathbf{s}$, and from the de-biasing procedure, each entry $s_{j,a}$ satisfies:
\[
  \mathbb{E}[s_{j,a}] = \sum_{x : x \bmod m_j = a} f_x = (A \mathbf{f})_{j,a}.
\]
Stacking over all blocks yields:
\[
  \mathbb{E}[\tilde{\mathbf{s}}] = W^{1/2} A \mathbf{f} = A_w \mathbf{f}.
\]
Taking the expectation of the estimator gives:
\[
  \mathbb{E}[\hat{\mathbf{f}}]
  = (A_w^\top A_w)^{-1} A_w^\top \mathbb{E}[\tilde{\mathbf{s}}]
  = (A_w^\top A_w)^{-1} A_w^\top A_w \mathbf{f}
  = \mathbf{f}.
\]
\end{proof}

\begin{corollary}[Asymptotic Unbiasedness, $\lambda > 0$]
\label{cor:unbiased-lambda}
Let $\hat{\mathbf{f}}_\lambda$ be the regularized estimator:
\[
  \hat{\mathbf{f}}_\lambda = (A_w^\top A_w + \lambda I)^{-1} A_w^\top \tilde{\mathbf{s}}.
\]
Then, as $\lambda \to 0$, the estimator converges in expectation to the true histogram:
\[
  \mathbb{E}[\hat{\mathbf{f}}_\lambda] \;\longrightarrow\; \mathbf{f}.
\]
For practical settings (\eg, $\lambda = 1/\varepsilon^2$), the bias introduced is $O(\lambda)$, which becomes negligible for large $\varepsilon$ (weaker privacy). 
When $\varepsilon$ is small (strong privacy), regularization bias may be more significant.
\end{corollary}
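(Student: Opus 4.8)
The plan is to reuse the expectation identity already established in the proof of Theorem~\ref{thm:unbiased-zero}, namely $\mathbb{E}[\tilde{\mathbf{s}}] = A_w \mathbf{f}$, and then to read off both the limit and the bias rate from a single clean matrix identity. Since the regularized map $(A_w^\top A_w + \lambda I)^{-1} A_w^\top$ is deterministic, linearity of expectation passes it through $\tilde{\mathbf{s}}$, so the entire statement reduces to a finite-dimensional perturbation argument about the operator $(M+\lambda I)^{-1}M$ where $M := A_w^\top A_w$.

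First I would substitute $\mathbb{E}[\tilde{\mathbf{s}}] = A_w \mathbf{f} = A_w\mathbf f$ into the estimator to obtain
\[
  \mathbb{E}[\hat{\mathbf{f}}_\lambda]
  = (M + \lambda I)^{-1} A_w^\top A_w\,\mathbf{f}
  = (M + \lambda I)^{-1} M \,\mathbf{f}.
\]
Because $A_w$ has full column rank (the hypothesis inherited from Theorem~\ref{thm:unbiased-zero}), $M$ is symmetric positive definite with $\sigma_{\min}(M) = \sigma_{\min}(A_w)^2 > 0$, so both $M^{-1}$ and $(M+\lambda I)^{-1}$ exist for every $\lambda \ge 0$ and the formula is well defined.

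Next I would isolate the bias with the algebraic identity
\[
  (M + \lambda I)^{-1} M - I = -\lambda\,(M + \lambda I)^{-1},
\]
obtained by writing $M = (M+\lambda I) - \lambda I$ and distributing the inverse. This yields the closed form $\mathbb{E}[\hat{\mathbf{f}}_\lambda] - \mathbf{f} = -\lambda\,(M + \lambda I)^{-1}\mathbf{f}$. Taking the spectral norm, using $\|(M + \lambda I)^{-1}\|_2 = 1/(\sigma_{\min}(M) + \lambda)$, and noting that $\mathbf f$ is a histogram so $\|\mathbf{f}\|_2 \le \|\mathbf{f}\|_1 = 1$, I get
\[
  \bigl\|\mathbb{E}[\hat{\mathbf{f}}_\lambda] - \mathbf{f}\bigr\|_2
  \;\le\; \frac{\lambda}{\sigma_{\min}(M) + \lambda}
  \;\le\; \frac{\lambda}{\sigma_{\min}(M)}
  \;=\; O(\lambda),
\]
and letting $\lambda \to 0$ gives $\mathbb{E}[\hat{\mathbf{f}}_\lambda] \to \mathbf{f}$ immediately.

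The computation is essentially routine once the spectral structure is exposed; the only step needing care is verifying that $\sigma_{\min}(M)$ stays bounded away from zero, which is exactly the full-column-rank assumption (and which the modulus search enforces by keeping $\kappa = \operatorname{cond}(A_w)$ small). This same bound also justifies the qualitative remark closing the corollary: the bias constant scales like $1/\sigma_{\min}(M) = 1/\sigma_{\min}(A_w)^2$, so in the strong-privacy regime, where the GLS weights can drive $\sigma_{\min}(A_w)$ toward zero, the $O(\lambda)$ term is amplified, whereas for large $\varepsilon$ the weighting keeps $\sigma_{\min}(M)$ well separated from zero and the regularization bias is negligible.
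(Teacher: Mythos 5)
Your proof is correct. The paper actually gives no proof for this corollary (it is stated as an immediate consequence of Theorem~\ref{thm:unbiased-zero} and followed only by a remark), and your argument is exactly the canonical one it tacitly relies on: pass the deterministic gain matrix through the expectation, use $\mathbb{E}[\tilde{\mathbf{s}}] = A_w\mathbf{f}$, and apply the resolvent identity $(M+\lambda I)^{-1}M = I - \lambda(M+\lambda I)^{-1}$ with $M = A_w^\top A_w$ positive definite under the full-column-rank hypothesis. Your explicit bound $\|\mathbb{E}[\hat{\mathbf{f}}_\lambda]-\mathbf{f}\|_2 \le \lambda/\sigma_{\min}(M)$ goes beyond the paper by actually substantiating the asserted $O(\lambda)$ bias rate and by tying the constant to $\sigma_{\min}(A_w)^2$, which correctly explains the paper's qualitative remark that the bias worsens in the strong-privacy regime (where the GLS weights $w_j \propto (p_j-q_j)^2$ shrink $\sigma_{\min}(A_w)$ while $\lambda = 1/\varepsilon^2$ simultaneously grows).
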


\begin{remark}
In practice, a small ridge term $\lambda > 0$ can be used to improve numerical conditioning and accelerate convergence of iterative solvers. 
Although this introduces a small bias, the estimator remains practically unbiased.
\end{remark}

\section{Analysis of MSS} \label{sec:analysis_mss}

This section analyzes MSS along the four axes of the multi-constraint regime highlighted in Section~\ref{sec:intro}.
We bound its communication and decoding costs, derive closed-form and worst-case MSE expressions in terms of the condition number $\kappa$, show how moduli selection controls $\kappa$, and quantify its resilience to data reconstruction attacks.

\subsection{Communication Cost} \label{sub:comm_cost}

Each user sends a pair $(J, Z)$ as defined in Algorithm~\ref{alg:mss-client}, where $J \in [\ell]$ is the block index and $Z \subseteq [m_J]$ is a noisy subset of size $\omega_J$ produced via SS.  
The number of bits is:

\begin{equation} \label{eq:mss_comm_cost}
    \left\lceil \log_2 \ell \right\rceil + \frac{1}{\ell} \sum_{j=0}^{\ell-1} \left\lceil \log_2 \binom{m_j}{\omega_j} \right\rceil.
\end{equation}

This reflects the average-case encoding cost, assuming uniform selection of the block index and optimal enumeration-based encoding over the $\binom{m_j}{\omega_j}$ possible subsets.

\subsection{Server-Side Decoding and Aggregation Cost}
\label{sub:mss_server_cost}

The total server-side runtime consists of three main phases: collecting residue counts, forming the debiased observation vector, and solving the weighted least-squares problem.

First, a single pass over the $n$ user reports is sufficient to aggregate per-block residue counts and compute normalization factors, requiring $O(n)$ time. Next, debiasing each empirical histogram $\bar{y}_j$ and applying variance-optimal weights to form the scaled vector $\tilde{\mathbf{s}}$ takes $O\bigl(\sum_{j=0}^{\ell-1} m_j\bigr)$ operations.

Finally, to recover the histogram $\hat{\mathbf{f}} \in \mathbb{R}^k$, the server solves a regularized least-squares system using the sparse weighted design matrix $A_w \in \mathbb{R}^{T \times k}$, where $T = \sum_{j} m_j$. 
As mentioned in Section~\ref{sub:mss_estimation}, we use \textsc{LSMR}~\cite{fong2011lsmr}, which takes $r$ iterations. 
Each iteration performs one matrix-vector multiplication with $A_w$ and $A_w^\top$, costing $O(k\ell)$ due to the structured sparsity of the CRT design.

\paragraph{Overall runtime.}
The total decoding complexity is thus:
\[
  O\left(n + k\ell + \sum_{j=0}^{\ell-1} m_j\right).
\]
In practice, one can select moduli such that $\sum_j m_j = O(k)$ (since each $m_j < k$ and $\ell = \Theta(\log k)$), yielding:
\[
  O(n + k\ell) \quad \text{with } \ell = \Theta(\log k).
\]

\paragraph{Simplified bounds.}
\begin{itemize}
  \item When the number of solver iterations is constant ($r = O(1)$), the runtime becomes: $\Theta(n + k \log k)$.
  \item In the worst case, when convergence requires $r = \Theta(k)$ iterations, the runtime becomes: $\Theta(n + k^2 \log k)$.
\end{itemize}

\subsection{Closed-form Variance of the Estimator}
\label{sub:mse}

Let $\mathbf f = (f_0, \dots, f_{k-1})^\top \in \mathbb{R}^k$ denote the unknown population histogram over domain $[k]$, satisfying $\sum_{x=0}^{k-1} f_x = 1$. 
For each modulus $m_j$ (with $j = 0, \dots, \ell - 1$), define the corresponding RNS marginal distribution:
\begin{equation*}
  g_{j,a} = \sum_{\smash{x:\,x \bmod m_j = a}} f_x,
  \qquad 
  \mathbf g_j = (g_{j,0}, \dots, g_{j, m_j-1})^\top.
\end{equation*}

\paragraph{Noisy subset selection.}

Conditioned on a user sampling block $J = j$, the probability that a specific residue $a \in [m_j]$ appears in the subset $Z$ is:
\begin{align*}
  \pi_{j,a} &= \mathbb{P}[a \in Z \mid J = j]
  = p_j \cdot g_{j,a} + q_j \cdot (1 - g_{j,a})\\
  &= q_j + (p_j - q_j) \cdot g_{j,a},
\end{align*}
where $p_j$ and $q_j$ are the true and false inclusion probabilities of \ss{} for block $j$, respectively.
Let $\mathbf{Y}_j \in \{0,1\}^{m_j}$ be the indicator vector for residues in $Z$. The server computes:
\[
\tilde{\mathbf{y}}_j = \frac{\bar{\mathbf{y}}_j - q_j \mathbf{1}}{p_j - q_j},
\quad
\bar{\mathbf{y}}_j = \frac{1}{n_j} \sum_{u=1}^{n_j} \mathbf{Y}_j^{(u)}.
\]
The covariance of $\tilde{\mathbf{y}}_j$ is:
\[
\Sigma_j(\mathbf f, n_j) =
\frac{1}{n_j (p_j - q_j)^2} \left( \operatorname{diag}(\boldsymbol{\pi}_j) - \boldsymbol{\pi}_j \boldsymbol{\pi}_j^\top \right).
\]

\paragraph{Global covariance.}
Define the global observation vector:
\begin{equation*}
  \tilde{\mathbf y} = 
  \begin{bmatrix} \tilde{\mathbf y}_0 \\ \vdots \\ \tilde{\mathbf y}_{\ell - 1} \end{bmatrix}
  \in \mathbb{R}^{T},
  \quad \text{with } T = \sum_{j=0}^{\ell - 1} m_j.
\end{equation*}
Since each user contributes to exactly one block, these per-block estimators are negatively correlated. For $j \ne j'$, the cross-block covariance becomes:
\begin{equation*}
  \operatorname{Cov}[\tilde{\mathbf y}_j, \tilde{\mathbf y}_{j'}]
  = -\frac{1}{n (p_j - q_j)(p_{j'} - q_{j'})}
    \boldsymbol\pi_j \boldsymbol\pi_{j'}^{\!\top}.
\end{equation*}

Stacking all components, the full covariance of $\tilde{\mathbf y}$ is:
\begin{equation*}
\begin{split}
  \Sigma(\mathbf f)
  &= \operatorname{blockdiag}\left( \mathbb{E}_{n_j}[\Sigma_j(\mathbf f, n_j)] \right)
     \,-\, \frac{\ell}{n^2} \, \mathbf{u} \mathbf{u}^\top, \\
  \text{with } \quad \mathbf{u} &=
    \begin{bmatrix}
      \boldsymbol\pi_0 / (p_0 - q_0) \\[-1pt]
      \vdots \\[-1pt]
      \boldsymbol\pi_{\ell - 1} / (p_{\ell - 1} - q_{\ell - 1})
    \end{bmatrix}.
\end{split}
\end{equation*}

\paragraph{Variance (\ie, Mean Squared Error -- MSE).}
Since the MSS estimator is unbiased (see Theorem~\ref{thm:unbiased-zero}), its mean squared error coincides with its variance: $\operatorname{MSE}_{\text{MSS}}(\mathbf f) = \operatorname{Var}[\hat{\mathbf{f}}]$.
Let $G = A_w^\dagger = (\tilde A^\top \tilde A + \lambda I)^{-1} \tilde A^\top$ be the gain matrix used in the estimator $\mathbf{\hat{f}}$. 
The variance of MSS is:
\begin{equation} \label{eq:mss-mse-general}
  \boxed{
  \operatorname{MSE}_{\text{MSS}}(\mathbf{\hat{f}})
  = \frac{1}{k} \, \operatorname{Tr} \big( G \, \Sigma(\mathbf f) \, G^\top \big)
  }.
\end{equation}
This expression holds for arbitrary distributions $\mathbf{f}$ and reflects the protocol’s total estimation risk.

\paragraph{Worst-case bound.} Let
\(
  \operatorname{MSE}_{\text{SS}}(\varepsilon,n)
  = \tfrac{4e^{\varepsilon}}{n\,(e^{\varepsilon}-1)^{2}}
\)
denote the worst-case per-coordinate MSE of a single SS block. 
Stacking the $\ell$ blocks, the MSS decoder outputs $\hat{\mathbf f}=G\tilde{\mathbf y}$ with $G:=A_{w}^{\dagger}$. 
The covariance of $\hat{\mathbf f}$ is $G\Sigma G^{\top}$, where $\Sigma$ is block-diagonal with copies of the \textsf{SS} covariance.

\begin{theorem}[Worst-Case MSE of MSS]\label{thm:mss_mse}
For any frequency vector $\mathbf f$ and any moduli choice with
finite $\kappa=\operatorname{cond}(A_{w})$,
\begin{equation} \label{eq:mss_mse_bound}
  \operatorname{MSE}_{\mathrm{MSS}}(\mathbf{\hat{f}})
  =\frac1k\operatorname{Tr}\!\bigl(G\Sigma (\mathbf{f}) G^{\!\top}\bigr)
  \;\le\;
  \boxed{\frac{4\,\kappa\,e^{\varepsilon}}%
              {n\,(e^{\varepsilon}-1)^{2}}}.
\end{equation}
\end{theorem}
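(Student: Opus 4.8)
The plan is to start from the exact variance identity already recorded in \eqref{eq:mss-mse-general}, namely $\operatorname{MSE}_{\mathrm{MSS}}(\hat{\mathbf f}) = \tfrac1k\operatorname{Tr}(G\,\Sigma(\mathbf f)\,G^{\top})$ with gain matrix $G=A_w^{\dagger}$. Because the estimator is unbiased (Theorem~\ref{thm:unbiased-zero}), this trace is the full MSE, so it suffices to upper bound it by $\kappa\,\operatorname{MSE}_{\mathrm{SS}}(\varepsilon,n)=\tfrac{4\kappa e^{\varepsilon}}{n(e^{\varepsilon}-1)^2}$. I would organize the bound into three layers: first discard the negative cross-block term, then normalise each block's covariance down to the single-block SS scale using the variance-optimal weights, and finally pay a single conditioning factor for the least-squares inversion.

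For the first two layers, I would exploit the structure $\Sigma(\mathbf f)=\operatorname{blockdiag}\!\bigl(\mathbb{E}_{n_j}[\Sigma_j]\bigr)-\tfrac{\ell}{n^{2}}\mathbf u\mathbf u^{\top}$. Since $\mathbf u\mathbf u^{\top}\succeq 0$ and congruence by $G$ preserves the Loewner order, $G\Sigma(\mathbf f)G^{\top}\preceq G\,\Sigma_{\mathrm{bd}}\,G^{\top}$ for the block-diagonal part $\Sigma_{\mathrm{bd}}$; taking traces, the negative cross-block correlation can only help, so it is enough to bound $\tfrac1k\operatorname{Tr}(G\Sigma_{\mathrm{bd}}G^{\top})$. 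I would then show each weighted block is variance-normalised: writing $M_j:=w_j\,\mathbb{E}_{n_j}[\Sigma_j]$ and recalling $w_j=1/\sigma_j^{2}$, the scalar $1/(n_j(p_j-q_j)^{2})$ cancels, leaving $M_j=\tfrac{1}{\pi_j(1-\pi_j)}\bigl(\operatorname{diag}\boldsymbol\pi_j-\boldsymbol\pi_j\boldsymbol\pi_j^{\top}\bigr)$. As $\operatorname{diag}\boldsymbol\pi_j-\boldsymbol\pi_j\boldsymbol\pi_j^{\top}\preceq\operatorname{diag}\boldsymbol\pi_j$, its spectral norm is at most $\max_a \pi_{j,a}/(\pi_j(1-\pi_j))$, which the choice of the representative marginal $\pi_j$ keeps at the worst-case SS per-coordinate variance scale (the $n$ versus $n_j$ bookkeeping being absorbed by the weights). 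Folding the weights into $G$, this gives $\|M\|_2$ bounded at the per-coordinate $\operatorname{MSE}_{\mathrm{SS}}$ level.

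For the third layer I would use $\tfrac1k\operatorname{Tr}(GMG^{\top})\le\|M\|_2\,\sigma_{\max}(G)^{2}$ together with $\sigma_{\max}(G)=1/\sigma_{\min}(A_w)$, so that $\operatorname{MSE}_{\mathrm{MSS}}\le \|M\|_2/\sigma_{\min}(A_w)^{2}$. In the perfectly conditioned design ($\kappa=1$, all singular values equal), $A_w^{\top}A_w=\beta I$ and a direct computation gives $\operatorname{MSE}_{\mathrm{MSS}}=\operatorname{MSE}_{\mathrm{SS}}$ exactly, so MSS loses nothing in the ideal case. For general moduli, $\sigma_{\min}(A_w)$ degrades relative to this equal-singular-value baseline precisely by the conditioning of $A_w$, and the inflation ratio $\operatorname{MSE}_{\mathrm{MSS}}/\operatorname{MSE}_{\mathrm{SS}}$ is then governed by $\kappa=\sigma_{\max}(A_w)/\sigma_{\min}(A_w)$, yielding the boxed bound.

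The hard part will be this third step: pinning the inversion penalty to a single factor of $\kappa$ rather than to $\operatorname{cond}(A_w^{\top}A_w)=\kappa^{2}$, which is all the naive chain $\tfrac1k\operatorname{Tr}\bigl((A_w^{\top}A_w)^{-1}\bigr)\le k/\sigma_{\min}(A_w)^{2}$ delivers. Getting the linear-in-$\kappa$ constant requires the CRT geometry of $A$ (Definition~\ref{def:rns}): every column carries exactly $\ell$ ones, and by injectivity of the residue map any two distinct inputs agree in strictly fewer than $\ell$ blocks, so $A^{\top}A$ is strongly diagonally dominant with diagonal $\ell$ and both $\sigma_{\max}(A_w)$ and $\sigma_{\min}(A_w)$ concentrate near a common $\Theta(\sqrt{w\ell})$ scale. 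Quantifying this spectrum against the normalised baseline, while controlling the mild approximation in the weights (the representative $\pi_j$ versus the true per-coordinate $\pi_{j,a}$), is exactly where the single factor $\kappa$ must be extracted.
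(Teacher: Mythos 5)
Your skeleton coincides with the paper's: start from the exact identity $\operatorname{MSE}=\tfrac1k\operatorname{Tr}(G\Sigma G^{\top})$, discard the (negative) cross-block rank-one term so that only the block-diagonal SS covariances remain, bound the trace by a spectral norm of $G$ times the normalised block covariance, and charge the inversion to $\operatorname{cond}(A_w)$. Your first two layers are, if anything, more careful than the paper's sketch, which simply declares $\Sigma$ block-diagonal with copies of the SS covariance.

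The genuine gap is the one you flag yourself, and it is not a finishing detail but the entire content of the boxed inequality: your chain terminates at $\operatorname{MSE}\le\|M\|_2/\sigma_{\min}(A_w)^{2}$, and the only substitution you actually justify, $\sigma_{\min}^{-2}\le\kappa^{2}/\sigma_{\max}^{2}$, yields $\kappa^{2}$, not $\kappa$. Your proposed rescue --- that the CRT structure makes $A^{\top}A$ diagonally dominant so that $\sigma_{\max}(A_w)$ and $\sigma_{\min}(A_w)$ ``concentrate near a common scale'' --- is not an argument for replacing $\kappa^{2}$ by $\kappa$; made rigorous, it would instead be a proof that $\kappa$ itself is small, which is a different statement (it is Theorem~\ref{thm:kappa_bound}, used in the paper to \emph{control} $\kappa$, not to lower the exponent on $\kappa$ inside the MSE bound). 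The paper's proof sketch takes a different route precisely at this juncture: it writes $\lVert G\rVert_{2}=\sigma_{\min}^{-1}(A_w)=\kappa/\sigma_{\max}(A_w)$ and then invokes the structural fact that each column of $A_w$ has exactly one nonzero per block, so $\sigma_{\max}(A_w)=\sqrt{S}$ with $S=\sum_j w_j$; the factor $S$ is then cancelled against $\operatorname{Tr}(\Sigma)$ to leave a single power of $\kappa$ multiplying $\operatorname{MSE}_{\mathrm{SS}}$. Whatever one thinks of the tightness of that cancellation, it is the designated mechanism for the linear-in-$\kappa$ constant, and your proposal neither reproduces it nor supplies an alternative. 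As written, the proposal establishes at most $\operatorname{MSE}_{\mathrm{MSS}}\le\kappa^{2}\operatorname{MSE}_{\mathrm{SS}}$ and therefore does not prove Theorem~\ref{thm:mss_mse}.
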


\begin{proof}[Proof Sketch]
For the unbiased estimator $\hat{\mathbf f}=G\tilde{\mathbf y}$ one has $\operatorname{MSE}=k^{-1}\operatorname{Tr}\!\bigl(G\Sigma G^{\!\top}\bigr)$, where $\Sigma$ is the covariance of $\tilde{\mathbf y}$.
Trace--Cauchy--Schwarz yields $\operatorname{Tr}(G\Sigma G^{\!\top})\le \lVert G\rVert_{2}^{2}\operatorname{Tr}(\Sigma)$.
Since $G=A_{w}^{\dagger}$, $\lVert G\rVert_{2}=\sigma_{\min}^{-1}(A_{w}) = \kappa/\sigma_{\max}(A_{w})$.
Every row of $A_{w}$ contains exactly one entry 1; hence $\sigma_{\max} (A_{w})=\sqrt{S}$ with $S=\sum_j w_j$.
Cancelling $S$ gives Eq.~\eqref{eq:mss_mse_bound}.
\end{proof}

\paragraph{Bounding $\kappa$ analytically.}
Let the $\ell$ moduli $m_0,\dots,m_{\ell-1}$ be pairwise-coprime primes drawn
from the interval $[L,H]$ with $L=\frac{k}{\beta\ell}$ and $H=\frac{\beta\,k}{\ell}$ where $\beta>1$. 
Set
\begin{equation}\label{eq:t_star_alpha}
  T^{\star}
  \;=\;
  \Bigl\lceil
      \frac{\ln k}{\ln\bigl(k/(\beta\ell)\bigr)}
  \Bigr\rceil,
  \qquad
  \alpha
    = \frac{w_{\max}}{w_{\min}}
    \;\;\le\;\;
    \frac{\beta + e^{\varepsilon}}{1/\beta + e^{\varepsilon}} .
\end{equation}

\begin{theorem}[Condition-number bound]\label{thm:kappa_bound}
With probability 1 over the random prime selection,
\begin{equation}
  \kappa
  \;=\;\operatorname{cond}(A_{w})
  \;\le\;
  \alpha\,
  \frac{\ell+T^{\star}}{\ell-T^{\star}} .
  \label{eq:cond_number}
\end{equation}
Thus, any
\(
  \displaystyle
  \ell \;\ge\;
  \frac{1+\kappa_{\max}/\alpha}{\kappa_{\max}/\alpha-1}\,T^{\star}
\)
guarantees $\kappa\le\kappa_{\max}$.
\end{theorem}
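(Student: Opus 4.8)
The plan is to separate the effect of the per-block weights from the purely combinatorial conditioning of the unweighted CRT design, and then to control the latter through the collision multiplicity $T^\star$. First I would write $A_w^\top A_w = A^\top W A$ and sandwich it between scalar multiples of the unweighted Gram matrix, $w_{\min}\,A^\top A \preceq A_w^\top A_w \preceq w_{\max}\,A^\top A$, which holds because $w_{\min} I \preceq W \preceq w_{\max} I$. Weyl's inequalities then give $\operatorname{cond}(A_w^\top A_w) \le \alpha\,\operatorname{cond}(A^\top A)$ with $\alpha = w_{\max}/w_{\min}$, so that $\kappa = \operatorname{cond}(A_w) \le \sqrt{\alpha\,\operatorname{cond}(A^\top A)}$. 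The closed form $\alpha \le (\beta+e^\varepsilon)/(1/\beta+e^\varepsilon)$ in Eq.~\eqref{eq:t_star_alpha} I would obtain separately, by substituting $p_j,q_j,\pi_j$ into $w_j=(p_j-q_j)^2 n_j/(\pi_j(1-\pi_j))$, using $n_j$ essentially constant across blocks, and maximizing the resulting ratio over $m_j\in[L,H]$ (the extremes $L=k/(\beta\ell)$ and $H=\beta k/\ell$ being attained by monotonicity).

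The core of the argument is the bound $\operatorname{cond}(A^\top A)\le (\ell+T^\star)/(\ell-T^\star)$. I would write $A^\top A = \ell I + E_0$, where every diagonal entry equals $\ell$ because each column of $A$ carries exactly one $1$ per block and $x\equiv x \pmod{m_j}$ always, while the off-diagonal entry is $(E_0)_{x,x'}=\#\{\,j : m_j\mid (x-x')\,\}$. The key number-theoretic lemma caps this multiplicity: if $m_{j_1},\dots,m_{j_t}$ all divide $x-x'$ then, being pairwise coprime, their product divides $x-x'$, so $L^{\,t}\le \prod_i m_{j_i}\le |x-x'| < k$; since $m_j\ge L=k/(\beta\ell)$ this forces $t\le \ln k/\ln(k/(\beta\ell))\le T^\star$. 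It remains to upgrade this entrywise cap into the spectral bound $\|E_0\|_2\le T^\star$ on the estimation subspace, after which $\lambda_{\min}(A^\top A)\ge \ell-T^\star$, $\lambda_{\max}(A^\top A)\le \ell+T^\star$, and $\operatorname{cond}(A^\top A)\le(\ell+T^\star)/(\ell-T^\star)$ follow immediately. Combining with the weight step gives $\kappa\le\sqrt{\alpha(\ell+T^\star)/(\ell-T^\star)}\le\alpha(\ell+T^\star)/(\ell-T^\star)$, the last inequality using that the radicand is at least $1$.

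The hard part is precisely the passage from the entrywise cap to $\|E_0\|_2\le T^\star$: a direct Gershgorin estimate is useless here, since a fixed coordinate collides with $\Theta(k/m_j)$ others in each block and the off-diagonal row sums are $\Theta(\ell^2)$, not $O(T^\star)$. The saving must come from cancellation rather than magnitude. I would therefore decompose $E_0=\sum_j(A_j^\top A_j - I)$ and exploit that, for pairwise-coprime moduli, the mean-zero ``class-constant'' subspaces $\operatorname{range}(A_j^\top A_j)\cap\1^\perp$ are mutually near-orthogonal by the Chinese Remainder Theorem, equivalently by orthogonality of the additive characters $x\mapsto e^{2\pi i t x/m_j}$ across distinct prime moduli. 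This is also where ``with probability 1'' enters: a random draw of distinct primes from $[L,H]$ is pairwise coprime almost surely, which is exactly the hypothesis that makes the cross-block overlaps vanish and keeps the aggregate perturbation at the level of the per-pair multiplicity $T^\star$. A delicate point I would handle explicitly is the all-ones direction: since $A_j^\top A_j$ acts as $\approx k/m_j$ on $\1$, the top singular value of $A_w$ is inflated along $\1$, so the conditioning claim must be read on the mean-zero estimation subspace $\1^\perp$ (or under the histogram constraint $\sum_x f_x=1$), not on all of $\R^k$. Finally, the stated $\ell$-threshold is pure algebra: writing $\rho=\kappa_{\max}/\alpha$, the requirement $\alpha(\ell+T^\star)/(\ell-T^\star)\le\kappa_{\max}$ rearranges to $T^\star(1+\rho)\le \ell(\rho-1)$, i.e.\ $\ell\ge \frac{1+\kappa_{\max}/\alpha}{\kappa_{\max}/\alpha-1}T^\star$, valid whenever $\kappa_{\max}>\alpha$.
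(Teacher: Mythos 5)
Your proposal diverges from the paper's proof at the crucial spectral step, and the divergence is well motivated: the paper's sketch applies Gershgorin discs to $G=A^{\top}WA$, asserting ``at most $T^{\star}$ off-diagonal collisions per row,'' but as you observe, $T^{\star}$ only bounds the number of colliding \emph{blocks per pair} $(x,x')$. A fixed $x$ collides with $\approx k/m_j$ other domain values in each block, so the off-diagonal Gershgorin row sum is $\sum_j w_j\bigl(n_{j,x\bmod m_j}-1\bigr)=\Theta\bigl(\ell\sum_j w_j\bigr)$, not $O(T^{\star}w_{\max})$, and the all-ones vector has Rayleigh quotient $\approx\sum_j w_j k/m_j\gg w_{\max}(\ell+T^{\star})$. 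Your points that the claimed eigenvalue bounds can only hold on the mean-zero subspace $\1^{\perp}$ and that the saving must come from cancellation rather than magnitude are both correct and go beyond what the paper establishes. The remaining ingredients of your plan --- the sandwich $w_{\min}A^{\top}A\preceq A_w^{\top}A_w\preceq w_{\max}A^{\top}A$, the number-theoretic cap $|C(x,x')|\le T^{\star}$ via pairwise coprimality and $\prod_{j\in C}m_j\mid(x-x')<k$, and the closing algebra for the $\ell$-threshold --- match the paper's or strengthen them (your chain would even give $\sqrt{\alpha(\ell+T^{\star})/(\ell-T^{\star})}$ before weakening to the stated bound).

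However, the step you yourself flag as the hard part --- upgrading the entrywise cap $(E_0)_{x,x'}\le T^{\star}$ to the operator bound $\lVert E_0\rVert_2\le T^{\star}$ on $\1^{\perp}$ --- is left as an assertion. The appeal to near-orthogonality of the class-constant subspaces via the CRT or additive characters is the right intuition but not a proof: the characters $x\mapsto e^{2\pi i tx/m_j}$ are exactly orthogonal over $\mathbb{Z}/(\prod_j m_j)$, not over the truncated window $[k]$ with $k\ll\prod_j m_j$, and quantifying the resulting cross-block inner products --- and showing their aggregate contribution is governed by $T^{\star}$ rather than by $\ell$ or $k/m_j$ --- is precisely where the entire difficulty of Eq.~\eqref{eq:cond_number} resides. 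Until that lemma is stated and proved, the proposal does not establish the theorem; it replaces the paper's Gershgorin step with a plausible but unproven spectral claim. In fairness, the paper's own proof sketch is incomplete at exactly the same point, so the conclusion is that both arguments stop short of a full proof --- but yours correctly identifies where the missing work lies.
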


\begin{proof}[Proof Sketch]
Because each column of $A$ has exactly one ``1'' per block, the Gram matrix $G=A^{\!\top}WA$ has diagonal entries $S=\sum_j w_j$ and at most $T^{\star}$ off-diagonal collisions per row. 
Gershgorin discs give $\lambda_{\min}(G)\ge w_{\min}(\ell-T^{\star})$ and $\lambda_{\max}(G)\le w_{\max}(\ell+T^{\star})$, yielding the claim.
The bound on $T^{\star}$ follows from the fact that $\prod_{j\in C(x,x')}m_j$ divides $|x-x'|<k$ for any distinct $x,x'\in[k]$.
\end{proof}

\begin{remark}[Conservative bound]\label{rmk:kappa_loose}
The lower limit \(
    \ell_{\text{theory}}
    =(1+\kappa_{\max}/\alpha)/( \kappa_{\max}/\alpha-1)\,T^{\star}
\)
is deliberately conservative.  
It combines (i) the \emph{largest} possible weight ratio \(w_{\max}/w_{\min}\) (obtained from the extremal moduli in \([L, H]\)) with (ii) Gershgorin discs that assume the \emph{maximum} number \(T^{\star}\) of off-diagonal collisions.  
Both choices over-estimate \(\kappa(A_{w})\), so the bound is sufficient but generally not tight; smaller values of~\(\ell\) frequently satisfy \(\kappa\le\kappa_{\max}\) in practice.
\end{remark}

\subsubsection{Optimized Moduli Selection}
\label{sub:mss_moduli_optimization}

The accuracy-bandwidth trade-off in \mss{} is dictated by the pairwise-coprime prime moduli $\mathbf m=(m_0,\ldots,m_{\ell-1})$. 
A valid tuple of moduli for \mss{} must:
\begin{enumerate}[label=(\roman*),leftmargin=1.7em]
  \item \textbf{cover the domain}:
        $\prod_{j=0}^{\ell-1}m_j\ge k$ (CRT property);\!
  \item \textbf{ensure full rank}:
        $\sum_{j=0}^{\ell-1}(m_j-1)\ge k$; and
  \item \textbf{yield a small condition number}
        $\kappa=\operatorname{cond}(A_w)\le\kappa_{\max}$
        so that Eq.~\eqref{eq:mss_mse_bound} guarantees low worst-case MSE.
\end{enumerate}

Because an exhaustive search is intractable, we combine the analytic
$\kappa$-bound (Theorem~\ref{thm:kappa_bound}) with lightweight random sampling.
The full pseudocode for Steps 1--3 below is provided in 
% Appendix A of the extended version~\cite{full_version}.
Appendix~\ref{app:moduli_selection}.

\paragraph{Step 1: Analytic lower bound for \boldmath{$\ell$}.} 
Fixing a user-defined target $\kappa_{\max}$ (we use $\kappa_{\max}=10$), Theorem~\ref{thm:kappa_bound} yields a \emph{necessary} lower limit \(
  \ell_{\text{theory}}
  =(1+\kappa_{\max}/\alpha)/( \kappa_{\max}/\alpha-1)\,T^{\star}
\).
Because this bound is loose (Remark \ref{rmk:kappa_loose}), our implementation still \emph{starts the search at $\ell_{\text{theory}}=2$} and simply discards any candidate that eventually violates $\kappa\!\le\!\kappa_{\max}$.

\paragraph{Step 2: Prime-band sampling.}
Let the user choose a search width $\beta$ (default $\beta=20$).
For each \(\ell\in\{2,\dots,\ell_{\max}\}\) we draw \(\ell\) distinct primes from \(L=k/(\beta \ell),\;H=(\beta k)/\ell\).
If coverage or rank fails, we ``bump'' random moduli to the next prime
until (i)--(ii) hold.
Sampling stops as soon as a tuple reaches \(\kappa\le\kappa_{\max}\) or
after $\#\texttt{trials}$.

\paragraph{Step 3: Moduli selection by exact MSE.}
For every candidate tuple that satisfies (i)--(iii), we compute the exact MSE in~\eqref{eq:mss_mse_bound} and keep the tuple with the smallest value, thereby selecting the communication-optimal configuration that meets the target condition number.

\paragraph{Deterministic fallback.}
If no tuple attains \(\kappa_{\max}\) in $\#\texttt{trials}$, we fall back to the first $\ell$ primes $\geq \lceil k^{1/\ell} \rceil$ and deterministically increment them (left to right) until CRT and rank conditions hold; the analytic $\kappa$-bound still applies.

\subsection{Data Reconstruction Attack on MSS}\label{sub:mss_attack}

Following recent work on adversarial analysis of LDP protocols~\cite{Gursoy2022,Arcolezi2023}, we consider a Bayesian attacker who observes a single user report $y = (J, Z)$, knows the full protocol specification, and assumes a uniform prior $\Pr[x] = 1/k$ over the domain.  

The adversary aims to infer the true user input $x$ by computing the posterior distribution and selecting the most probable value. 
The probability of a correct guess, $\Pr[\hat{x} = x]$, defines the per-message \emph{Data Reconstruction Attack} (DRA).

\paragraph{Posterior support.}
Given an MSS report $y = (j, Z)$, the attacker infers the following posterior support set:
\[
\mathcal{S}_{j,Z} = \{x \in [k] \mid x \bmod m_j \in Z\},
\]
which includes all domain elements whose residue modulo $m_j$ appears in the subset $Z$.  
Its size satisfies
\[
|\mathcal{S}_{j,Z}| \le \omega_j \cdot \left\lceil \frac{k}{m_j} \right\rceil = \omega_j C_j,
\]
where $C_j = \left\lceil k / m_j \right\rceil$ is the max number of domain values per residue.
Assuming no further knowledge, the optimal strategy is to sample uniformly from $\mathcal{S}{j,Z}$, yielding success rate $1 / |\mathcal{S}{j,Z}|$ when $x \in \mathcal{S}_{j,Z}$.

\paragraph{Upper-bound on the expected DRA.}
The attacker succeeds only if the true residue is included in $Z$, which occurs with probability $p_j$ for block $j$.
Conditioned on this, the success rate is $1 / (\omega_j C_j)$, where $C_j = \lceil k / m_j \rceil$.
To keep the analysis concise, we upper-bound the DRA by assuming the \textbf{largest possible} posterior set size $\omega_j C_j$:
\[
\mathrm{DRA}_j \;=\;
  \underbrace{p_j}_{\text{truth in }Z}\;
  \cdot\;
  \underbrace{\frac{1}{|\mathcal S_{j,Z}|}}_{\text{Bayes rule}}
  \;\le\;
  p_j\,\frac{1}{\omega_j C_j}.
\]

Averaging over the uniformly chosen block index $J$ gives the following
closed-form \emph{upper bound} on the DRA:

\begin{equation} \label{eq:mss_asr}  
\boxed{
\widehat{\mathbb E}[\mathrm{DRA}]_{\textsf{MSS}}
\;:=\;
\frac{1}{\ell}\sum_{j=0}^{\ell-1}
\frac{p_j}{\omega_j\,
          \bigl\lceil{k}/{m_j}\bigr\rceil}
}
\;\;\;\le\;\;
\mathbb E[\mathrm{DRA}]_{\textsf{MSS}}.
\end{equation}

The equality holds whenever each residue class supports the same number of
domain elements (\eg, when $m_j\mid k$), but in general the bound can be
slightly loose. 
A tight expression together with a complete proof is provided in 
% Appendix B of the extended version~\cite{full_version}.
Appendix~\ref{app:attack_mss}.

\section{Experimental Results} \label{sec:results}

In this section, we aim to evaluate the four aspects mentioned in Section~\ref{sec:intro}: \textit{(i) }\textbf{\textit{Utility}},\textit{ (ii)} \textbf{\textit{Communication}}, \textit{(iii)} \textbf{\textit{Server runtime}}, and \textit{(iv)} \textbf{\textit{Attackability}}.
All experiments were run on a Desktop computer with a 3.2 GHz Intel Core i9 processor, 64 GB RAM, and Python 3.11.

\paragraph{Setting.} We benchmark our \mss{} against state-of-the-art single-message frequency oracles: \pgr, \grr, \ss, and \oue{} (the optimized \rappor{} variant).
Since worst-case MSE is distribution-independent (Table~\ref{tab:ldp_summary}), we adopt the synthetic Zipf ($s=3$) and Spike ($\mathbf{f} = [1, 0, \ldots, 0]$) benchmarks from~\cite{feldman2022}, both known to induce high estimation variance.
Unless noted otherwise, we fix $n=10{,}000$ users, domain $k \in \{1{,}024, 22{,}000\}$, privacy budget $\varepsilon \in \{0.5, 1.0, \dots, 4.5, 5.0\}$, and average results over 300 independent trials.

\paragraph{Utility comparison.} Fig.~\ref{fig:mse_zipf_spike} reports the MSE of each protocol as a function of the privacy budget $\varepsilon$, under both Zipf and Spike distributions.
Notably, the relative ordering and behavior of all protocols remain consistent across Zipf and Spike distributions, confirming that our conclusions are robust to underlying data characteristics.
Among all LDP frequency-oracle protocols, GRR consistently yields the highest error, due to its $\Theta(k/e^{\varepsilon})$ scaling and lack of structure exploitation.
In contrast, OUE, SS, and PGR achieve near-optimal utility across all settings, as all three match the information-theoretic MSE bound for single-message LDP protocols.
MSS tracks SS and PGR within $\leq 1.3\times$ throughout, showing that the modular encoding adds only negligible distortion.

\begin{figure}[!htb]
  \centering
  \begin{subfigure}{0.49\linewidth}
    \centering
    \includegraphics[width=\linewidth]{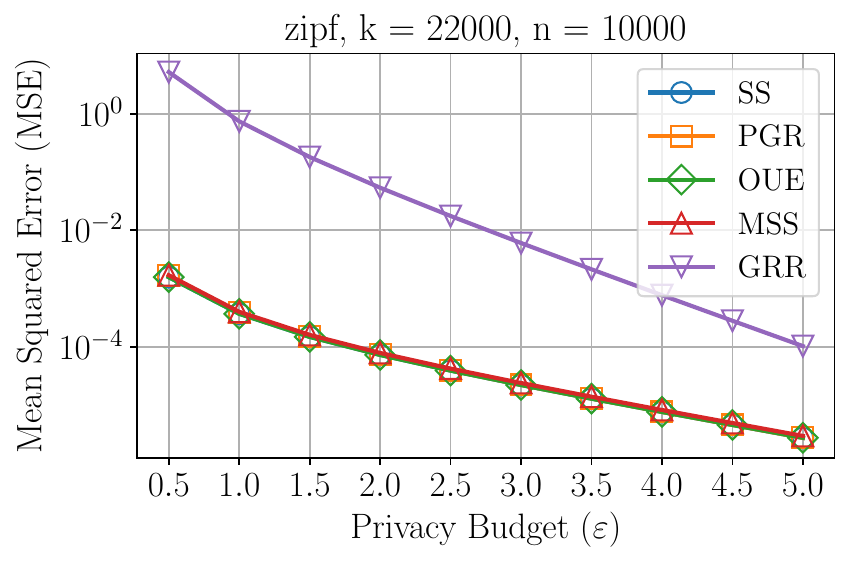}
    \caption{Zipf distribution ($s=3$).}
    \label{fig:mse_zipf}
  \end{subfigure}
  \hfill
  \begin{subfigure}{0.49\linewidth}
    \centering
    \includegraphics[width=\linewidth]{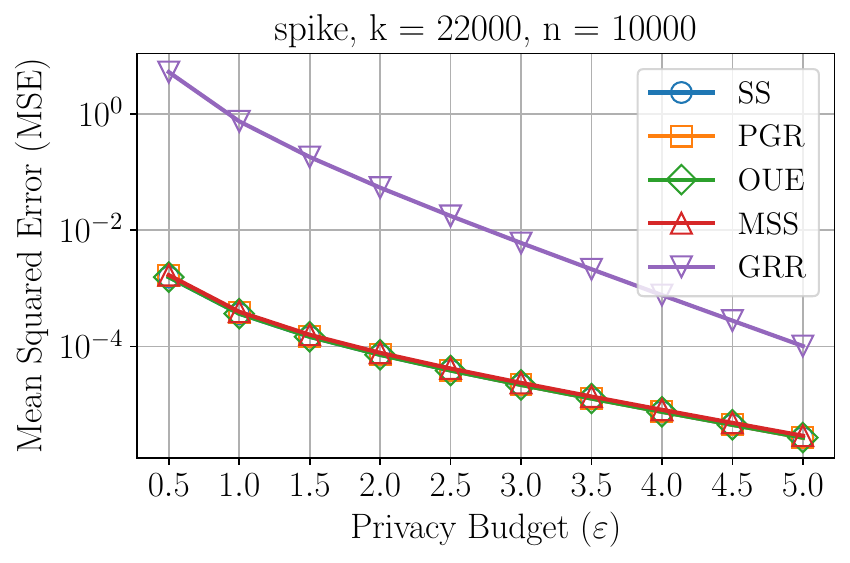}
    \caption{Spike distribution.}
    \label{fig:mse_spike}
  \end{subfigure}
  \caption{MSE vs.\ privacy parameter $\varepsilon$ for $k = 22{,}000$ and $n = 10{,}000$, under (a) Zipf and (b) Spike distributions.
  MSS closely tracks the near-optimal error curves of SS and PGR.}
  \label{fig:mse_zipf_spike}
\end{figure}

\paragraph{Communication cost.} Fig.~\ref{fig:bit_cost_k_1024_22000} shows the number of bits each user must transmit under both SS and MSS, as a function of $\varepsilon$ for $k = 1{,}024$ and $k = 22{,}000$.
Since MSS relies on a randomized moduli selection process, we report its average and standard deviation over the 300 runs.
Across all settings, MSS consistently achieves lower communication cost than SS, up to one-half in high-privacy regimes, while retaining
comparable accuracy (see Fig.~\ref{fig:mse_zipf_spike}).
We omit GRR and PGR from the figure for clarity: their per-report message length is fixed for a given $k$ (both use $O(\log k)$ bits) and is already summarized analytically in Table~\ref{tab:ldp_summary}.
GRR and PGR therefore form communication-efficient baselines, but as shown in Fig.~\ref{fig:mse_zipf_spike} and Table~\ref{tab:runtime_comparison}, they pay respectively in much higher MSE (GRR) or substantially higher decoding cost (PGR).

\begin{figure}[!htb]
  \centering
  \begin{subfigure}{0.49\linewidth}
    \centering
    \includegraphics[width=\linewidth]{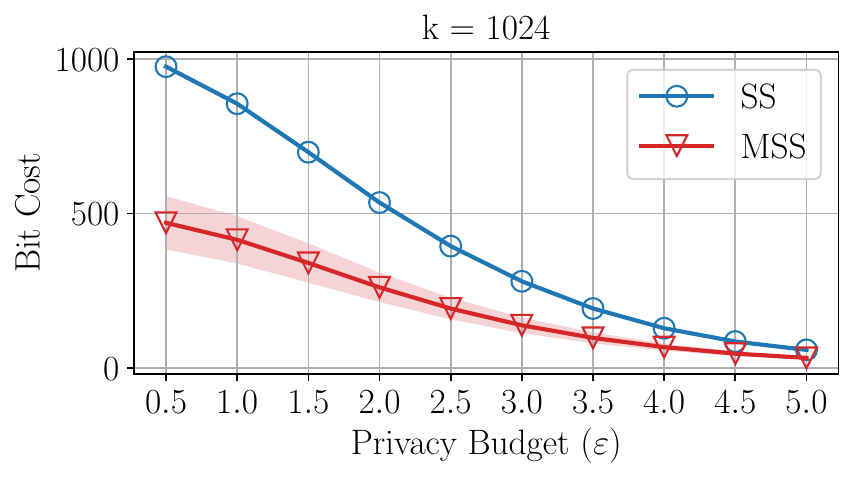}
    \caption{$k = 1{,}024$}
    \label{fig:bits_k_1024}
  \end{subfigure}
  \hfill
  \begin{subfigure}{0.49\linewidth}
    \centering
    \includegraphics[width=\linewidth]{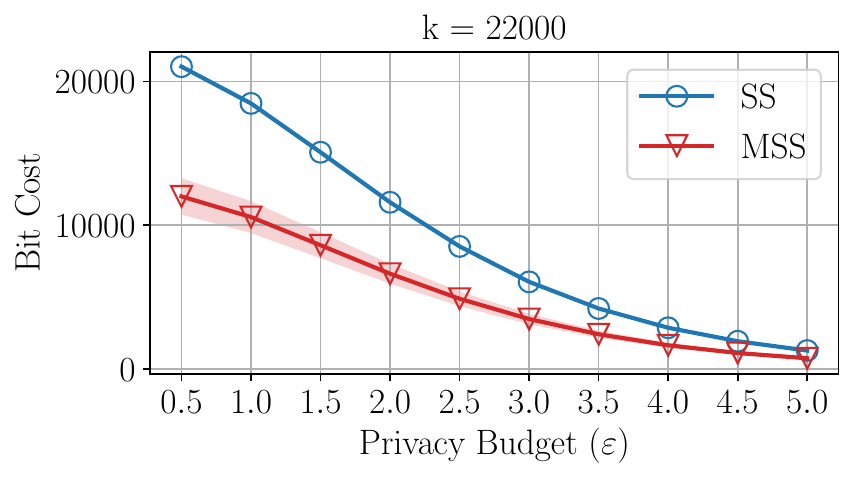}
    \caption{$k = 22{,}000$}
    \label{fig:bits_k_22000}
  \end{subfigure}
  \caption{Per-user message length (bits) of SS and MSS as a function of the $\varepsilon$, for two domain sizes. MSS consistently requires fewer bits than SS, especially in high privacy regimes.}
  \label{fig:bit_cost_k_1024_22000}
\end{figure}

\paragraph{Server runtime.}
We now compare the server-side runtime of our MSS protocol against the state-of-the-art PGR scheme by~\citet{feldman2022}. 
We run both protocols on the Zipf dataset of size $n = 10{,}000$ and domain size $k = 22{,}000$, across several privacy levels. 
Table~\ref{tab:runtime_comparison} reports the average and standard deviation of the server decoding time (in seconds) over 300 trials. 
MSS consistently outperforms PGR by large margins, achieving decoding speed-ups between $11\times$ and $448\times$.
This performance gap stems from their algorithmic differences: MSS solves a sparse weighted least-squares problem, while PGR relies on algebraic decoding over finite fields.
We do not plot GRR here, since its server cost is essentially a single histogram pass $O(n + k)$ and thus serves as a trivial lower bound on runtime; however, as Fig.~\ref{fig:mse_zipf_spike} and Fig.~\ref{fig:asr_zipf} show, GRR is not competitive in our multi-constraint regime due to its much worse utility and attackability.
The runtime spike at $\varepsilon = 4.5$ for PGR likely arises from parameter rounding and structural constraints in its projective geometry design.

\begin{table}[!htb]
    \centering
    % \scriptsize
    \begin{tabular}{cccc}
    \toprule
    \multirow{2}{*}{$\mathbf{\varepsilon}$} & \multicolumn{3}{c}{\textbf{Server-Side Runtime (in seconds)}} \\    
    & MSS & PGR & MSS Speed-up\\
    \midrule
    2.0 & 0.160 $\pm$ 0.027 & 2.897 $\pm$ 0.220 & 18.1$\times$ \\
    2.5 & 0.275 $\pm$ 0.094 & 4.019 $\pm$ 0.283 & 14.6$\times$ \\
    3.0 & 0.272 $\pm$ 0.086 & 9.618 $\pm$ 0.679 & 35.4$\times$ \\
    3.5 & 0.162 $\pm$ 0.050 & 1.908 $\pm$ 0.138 & 11.7$\times$ \\
    4.0 & 0.168 $\pm$ 0.056 & 11.461 $\pm$ 0.702 & 68.3$\times$ \\
    4.5 & 0.127 $\pm$ 0.047 & 56.906 $\pm$ 3.570 & 447.8$\times$ \\
    5.0 & 0.152 $\pm$ 0.054 & 3.208 $\pm$ 0.198 & 21.1$\times$ \\
    \bottomrule
    \end{tabular}
    \caption{Average$\pm$std of server-side runtime (in seconds) for our MSS and PGR, with $k=22{,}000$ and $n=10{,}000$.
    MSS is consistently faster than PGR.
    }
    \label{tab:runtime_comparison}
\end{table}

\paragraph{Attackability.} We now evaluate the vulnerability of each LDP protocol to a single-message data reconstruction attack (DRA) (see Section~\ref{sub:mss_attack}). 
Fig.~\ref{fig:asr_zipf} shows the empirical DRA as a function of the privacy budget $\varepsilon$, under the Zipf distribution for domain sizes $k = 100$ and $k = 1{,}024$.
MSS consistently achieves the lowest DRA across all $\varepsilon$ values, confirming its robustness to reconstruction attacks.
This is due to its modular randomization strategy, which distributes the probability mass across multiple residue classes, making inference more challenging.
In contrast, GRR and SS exhibit higher attackability, especially for small $k$, as their output space is tightly linked to the input domain.
PGR behaves comparably to SS/MSS/OUE at moderate $\varepsilon$, but for larger budgets its DRA increases sharply when $k$ is \emph{smaller} than the projective-domain size $K=(q^{t}-1)/(q-1)$ required by its internal geometry.  
This truncation mismatch breaks PGR’s combinatorial symmetry and makes certain messages disproportionately informative.
A fair comparison using the non-truncated setting $k=K$ is provided in 
% Appendix D of the extended version~\cite{full_version}.
Appendix~\ref{app:add_results}.

\begin{figure}[!htb]
  \centering
  \begin{subfigure}{0.49\linewidth}
    \centering
    \includegraphics[width=\linewidth]{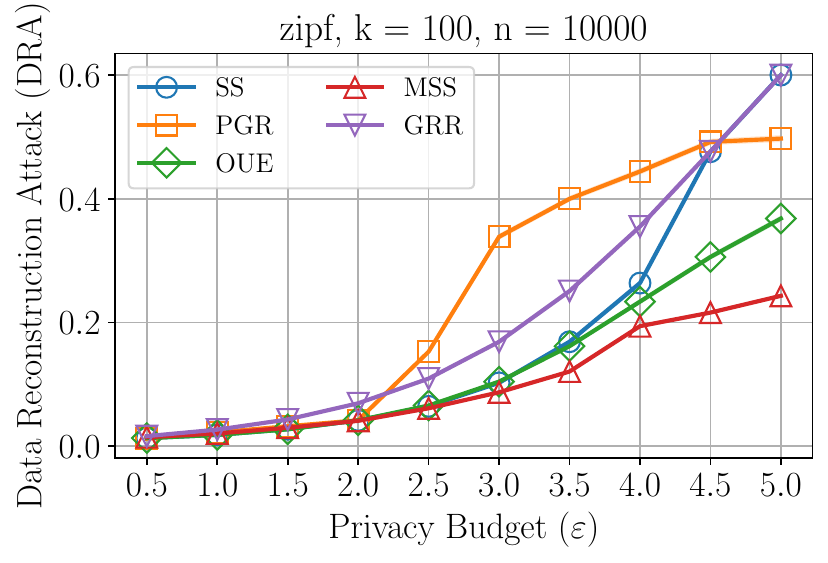}
    \caption{}
    \label{fig:asr_zipf_k_100}
  \end{subfigure}
  \hfill
  \begin{subfigure}{0.49\linewidth}
    \centering
    \includegraphics[width=\linewidth]{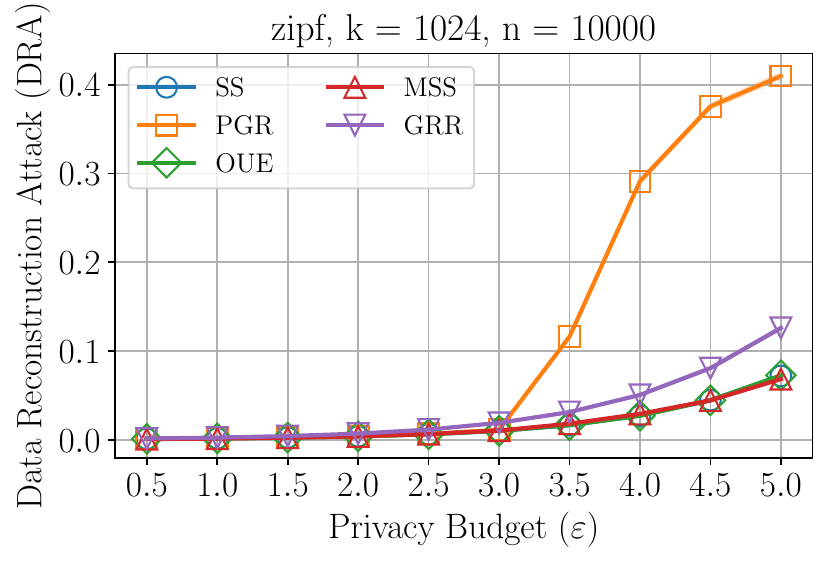}
    \caption{}
    \label{fig:asr_zipf_k_1024}
  \end{subfigure}
  \caption{Empirical Data Reconstruction Attack (DRA) of each protocol under the Zipf distribution, evaluated over $n = 10{,}000$ users.
  MSS provides the strongest protection across both small and large domains.
    }
  \label{fig:asr_zipf}
\end{figure}

\paragraph{Summary.}
Across all experiments, MSS matches the near-optimal utility of SS, OUE, and PGR, while requiring substantially fewer transmitted bits than SS (Fig.~\ref{fig:bit_cost_k_1024_22000}) and decoding orders of magnitude faster than PGR (Table~\ref{tab:runtime_comparison}). 
At the same time, MSS achieves the lowest empirical attack success rate among all protocols evaluated (Fig.~\ref{fig:asr_zipf}), demonstrating strong robustness to single-message reconstruction attacks. 
Taken together, these results position MSS in an effective operating regime for large-domain LDP frequency estimation, jointly balancing accuracy, communication cost, server-side computation, and attackability.

\paragraph{Ablation studies.}  
Appendix~\ref{app:add_results} 
% Appendix D in~\cite{full_version} 
presents additional experiments under different data distributions, numbers of users, and broader domain sizes, including several
ablation studies that further validate our findings.

\section{Conclusion} \label{sec:conclusion}

We introduce \mss~(MSS), a simple and powerful LDP-frequency oracle that leverages modular arithmetic to balance privacy, utility, communication, and attackability.  
Our results show that MSS achieves utility comparable to state-of-the-art protocols like SS and PGR, while significantly reducing communication cost compared to SS, lowering server runtime compared to PGR, and offering stronger protection against data reconstruction attacks.  
Future work includes extending to other statistical tasks, such as heavy hitters and multidimensional estimation.

\section*{Acknowledgments}
The author thanks Patricia Guerra-Balboa for her helpful comments on an earlier draft, and the anonymous reviewers for their insightful suggestions.
This work has been supported by the French National Research Agency (ANR): ``ANR-24-CE23-6239'' and ``ANR-23-IACL-0006''.

\bibliography{aaai2026}

@article{fong2011lsmr,
  title={LSMR: An iterative algorithm for sparse least-squares problems},
  author={Fong, David Chin-Lung and Saunders, Michael},
  journal={SIAM Journal on Scientific Computing},
  volume={33},
  number={5},
  pages={2950--2971},
  year={2011},
  publisher={SIAM},
  doi={doi.org/10.1137/10079687X}
}

@book{Szabó1967,
author="Szabó, Nicholas S. and Tanaka, Richard I.",
title="Residue arithmetic and its applications to computer technology",
publisher="McGraw-Hill",
year="1967",
series="McGraw-Hill series in information processing and computers",
URL="https://cir.nii.ac.jp/crid/1971430859864675075"
}

@book{hastie2009elements,
  title={The elements of statistical learning: data mining, inference, and prediction},
  author={Hastie, Trevor and Tibshirani, Robert and Friedman, Jerome H and Friedman, Jerome H},
  volume={2},
  year={2009},
  publisher={Springer}
}

@article{arcolezi2025revisiting,
  title={Revisiting Locally Differentially Private Protocols: Towards Better Trade-offs in Privacy, Utility, and Attack Resistance},
  author={Arcolezi, H{\'e}ber H and Gambs, S{\'e}bastien},
  journal={arXiv preprint arXiv:2503.01482},
  year={2025}
}

@article{geiping2020inverting,
  title={Inverting gradients-how easy is it to break privacy in federated learning?},
  author={Geiping, Jonas and Bauermeister, Hartmut and Dr{\"o}ge, Hannah and Moeller, Michael},
  journal={Advances in neural information processing systems},
  volume={33},
  pages={16937--16947},
  year={2020}
}

@article{hayes2023bounding,
  title={Bounding training data reconstruction in dp-sgd},
  author={Hayes, Jamie and Balle, Borja and Mahloujifar, Saeed},
  journal={Advances in neural information processing systems},
  volume={36},
  pages={78696--78722},
  year={2023}
}

@inproceedings{tianhao2017,
author = {Tianhao Wang and Jeremiah Blocki and Ninghui Li and Somesh Jha},
title = {Locally Differentially Private Protocols for Frequency Estimation},
booktitle = {26th {USENIX} Security Symposium ({USENIX} Security 17)},
year = {2017},
pages = {729--745},
publisher = {{USENIX} Association},
month = aug,
}

@article{first_ldp,
author = {Kasiviswanathan, Shiva Prasad and Lee, Homin K. and Nissim, Kobbi and Raskhodnikova, Sofya and Smith, Adam},
title = {What Can We Learn Privately?},
journal = {SIAM Journal on Computing},
volume = {40},
number = {3},
pages = {793-826},
year = {2011},
doi = {10.1137/090756090},
}

@article{gboard,
  title={Private federated discovery of out-of-vocabulary words for gboard},
  author={Sun, Ziteng and Kairouz, Peter and Sun, Haicheng and Gascon, Adria and Suresh, Ananda Theertha},
  journal={arXiv preprint arXiv:2404.11607},
  year={2024}
}

@inproceedings{rappor,
 author = {\'Ulfar Erlingsson and Vasyl Pihur and Aleksandra Korolova},
 title = {{RAPPOR}: Randomized Aggregatable Privacy-Preserving Ordinal Response},
 booktitle = {Proceedings of the 2014 ACM SIGSAC Conference on Computer and Communications Security},
 year = {2014},
 location = {Scottsdale, Arizona, USA},
 pages = {1054--1067},
 numpages = {14},
 doi = {10.1145/2660267.2660348},
 publisher = {ACM},
 address = {New York, NY, USA},
 keywords = {cloud computing, crowdsourcing, population statistics, privacy protection, statistical inference},
}

@incollection{microsoft,
title = {Collecting Telemetry Data Privately},
author = {Ding, Bolin and Kulkarni, Janardhan and Yekhanin, Sergey},
booktitle = {Advances in Neural Information Processing Systems 30},
editor = {I. Guyon and U. V. Luxburg and S. Bengio and H. Wallach and R. Fergus and S. Vishwanathan and R. Garnett},
pages = {3571--3580},
year = {2017},
publisher = {Curran Associates, Inc.}
}

@misc{apple,
    author = {\textrm{Apple Differential Privacy Team}},
    year = {2017},
    title = {Learning with privacy at scale},
    note = {\url{https://docs-assets.developer.apple.com/ml-research/papers/learning-with-privacy-at-scale.pdf}}
}

@inproceedings{Duchi2013,
  doi = {10.1109/focs.2013.53},
  year = {2013},
  month = oct,
  pages={429-438},
  publisher = {{IEEE}},
  author = {John C. Duchi and Michael I. Jordan and Martin J. Wainwright},
  title = {Local Privacy and Statistical Minimax Rates},
  booktitle = {2013 {IEEE} 54th Annual Symposium on Foundations of Computer Science}
}

@article{Warner1965,
  doi = {10.1080/01621459.1965.10480775},
  year = {1965},
  month = mar,
  publisher = {Informa {UK} Limited},
  volume = {60},
  number = {309},
  pages = {63--69},
  author = {Stanley L. Warner},
  title = {Randomized Response: A Survey Technique for Eliminating Evasive Answer Bias},
  journal = {Journal of the American Statistical Association}
}

@InProceedings{feldman2022,
  title = 	 {Private frequency estimation via projective geometry},
  author =       {Feldman, Vitaly and Nelson, Jelani and Nguyen, Huy and Talwar, Kunal},
  booktitle = 	 {Proceedings of the 39th International Conference on Machine Learning},
  pages = 	 {6418--6433},
  year = 	 {2022},
  month = 	 {17--23 Jul},
  publisher =    {PMLR},
}

@inproceedings{Guerra2024,
author = {Guerra-Balboa, Patricia and Sauer, Annika and Strufe, Thorsten},
title = {Analysis and Measurement of Attack Resilience of Differential Privacy},
year = {2024},
isbn = {9798400712395},
publisher = {Association for Computing Machinery},
address = {New York, NY, USA},
doi = {10.1145/3689943.3695046},
booktitle = {Proceedings of the 23rd Workshop on Privacy in the Electronic Society},
pages = {155–171},
numpages = {17},
keywords = {adversarial bound, attack resilience, differential privacy, private machine learning},
location = {Salt Lake City, UT, USA},
series = {WPES '24}
}

@inproceedings{Hadamard,
  title = 	 {Hadamard Response: Estimating Distributions Privately, Efficiently, and with Little Communication},
  author =       {Acharya, Jayadev and Sun, Ziteng and Zhang, Huanyu},
  booktitle = 	 {Proceedings of the Twenty-Second International Conference on Artificial Intelligence and Statistics},
  pages = 	 {1120--1129},
  year = 	 {2019},
  month = 	 {16--18 Apr},
}

@inproceedings{kairouz2016discrete,
  title={Discrete distribution estimation under local privacy},
  author={Kairouz, Peter and Bonawitz, Keith and Ramage, Daniel},
  booktitle={International Conference on Machine Learning},
  pages={2436--2444},
  year={2016},
  organization={PMLR}
}

@article{wang2016mutual,
  title={Mutual information optimally local private discrete distribution estimation},
  author={Wang, Shaowei and Huang, Liusheng and Wang, Pengzhan and Nie, Yiwen and Xu, Hongli and Yang, Wei and Li, Xiang-Yang and Qiao, Chunming},
  journal={arXiv preprint arXiv:1607.08025},
  year={2016}
}

@article{Gursoy2022,
  author={Emre Gursoy, M. and Liu, Ling and Chow, Ka-Ho and Truex, Stacey and Wei, Wenqi},
  journal={IEEE Transactions on Information Forensics and Security}, 
  title={An Adversarial Approach to Protocol Analysis and Selection in Local Differential Privacy}, 
  year={2022},
  volume={17},
  number={},
  pages={1785-1799},
  doi={10.1109/TIFS.2022.3170242}}

@article{Arcolezi2023,
author = {Arcolezi, H\'{e}ber H. and Gambs, S\'{e}bastien and Couchot, Jean-Fran\c{c}ois and Palamidessi, Catuscia},
title = {On the Risks of Collecting Multidimensional Data Under Local Differential Privacy},
year = {2023},
issue_date = {January 2023},
publisher = {VLDB Endowment},
volume = {16},
number = {5},
issn = {2150-8097},
doi = {10.14778/3579075.3579086},
journal = {Proc. VLDB Endow.},
month = {jan},
pages = {1126–1139},
numpages = {14}
}

\newpage

\onecolumn

\renewcommand{\thesection}{\Alph{section}}
\setcounter{section}{0}
\setcounter{secnumdepth}{2}

\section{Algorithms for Optimized Moduli Selection} \label{app:moduli_selection}

Algorithm~\ref{alg:choose-moduli} \textsc{ChooseModuli} is the \emph{outer} loop: for each block count \(\ell = 2,\dots,\ell_{\max}\), it calls \textsc{FindValidModuli} to generate valid tuples of prime moduli and selects the one with lowest estimated MSE from Eq.~\eqref{eq:mss-mse-general}.  
Algorithm~\ref{alg:find-valid-moduli} is the \emph{inner} sampler: it draws \(\ell\) distinct primes from a band centered at \(k/\ell\), repairs them until coverage and rank are satisfied, and estimates their condition number using the \emph{spectral condition number} of the weighted design matrix \(A_w\) (\ie, the ratio of its largest to smallest singular values). 
The search stops early if a sufficiently well-conditioned tuple is found; otherwise, the best candidate is retained.  
If no valid set is found, a deterministic fallback selects the first \(\ell\) primes above \(\lceil k^{1/\ell} \rceil\) and incrementally adjusts them until all constraints are satisfied.
All moduli selection is performed \emph{offline} and can be efficiently cached for reuse.
We adopt the default hyper-parameters \(\kappa_{\max}=10\), $\ell_{max}=20$, \(\beta=20\), and \(\#\texttt{trials}=10^3\).

\begin{algorithm}[H]
\caption{\textsc{ChooseModuli}$(k,\varepsilon,\ell_{\max},
                           \kappa_{\max},\beta,\#\texttt{trials})$
                           }
\label{alg:choose-moduli}
\begin{algorithmic}[1]
\Require Domain size $k$; privacy budget $\varepsilon$;
        upper bound on blocks $\ell_{\max}$;
        target condition number $\kappa_{\max}$;
        search width $\beta$;
        sampling budget $\#\texttt{trials}$
\Ensure Pairwise-coprime prime moduli $\mathbf m$
\State $(\mathbf m^\star,\mathrm{MSE}^\star)\gets(\textsc{None},\infty)$
\For{$\ell=2$ \textbf{to} $\ell_{\max}$} \Comment{Remark~\ref{rmk:kappa_loose}}
    \State $\mathbf m\gets$\textsc{FindValidModuli}%
           $(k,\ell,\kappa_{\max},\beta,\#\texttt{trials})$
    \If{$\mathbf m=\textsc{None}$} \textbf{continue} \EndIf
    \State Evaluate MSE with Eq.~\eqref{eq:mss-mse-general}
    \If{MSE $<\mathrm{MSE}^\star$}
        \State $(\mathbf m^\star,\mathrm{MSE}^\star)
               \gets(\mathbf m,\mathrm{MSE})$
    \EndIf
\EndFor
\State \Return $\mathbf m^\star$
\end{algorithmic}
\end{algorithm}

\vspace{0.4em}

\begin{algorithm}[H]
\caption{\textsc{FindValidModuli}$(k,\ell,\kappa_{\max},
                                   \beta,\#\texttt{trials})$
                                   }
\label{alg:find-valid-moduli}
\begin{algorithmic}[1]
\Require Domain size $k$; block count $\ell$;
        target condition number $\kappa_{\max}$;
        search width $\beta$;
        sampling budget $\#\texttt{trials}$
\Ensure Valid moduli tuple $\mathbf m$ or \textsc{None}
\State $L\gets k/(\beta\ell)$,\;
       $H\gets\min\!\bigl(\beta k/\ell,\;0.95\,k\bigr)$
\State $\mathcal P\gets\textsc{PrimesInBand}(L,H)$ \Comment{all primes in $[L,H]$}
\For{$t=1$ \textbf{to} $\#\texttt{trials}$}
    \State Sample $\ell$ distinct primes $\mathbf m\subset\mathcal P$
    \While{$\prod_j m_j<k$ \textbf{or} $\sum_j(m_j-1)<k$}
        \State Bump a random $m_j$ to next prime $>m_j$
    \EndWhile
    \State $\kappa\gets \operatorname{cond}(A_w)$  \Comment{Spectral condition number (largest/smallest singular value)}
    \If{$\kappa\le\kappa_{\max}$} \Return $\mathbf m$ \EndIf
\EndFor
\Statex \hrulefill
\Statex \textbf{Deterministic Fallback:}
\State Initialize $\mathbf m$ with first $\ell$ primes $\ge \lceil k^{1/\ell} \rceil$
\State $i \gets 0$
\While{$\prod_j m_j<k$ \textbf{or} $\sum_j(m_j{-}1)<k$}
    \State $m_i \gets$ next prime $> m_i$
    \State $i \gets (i+1)\bmod \ell$
\EndWhile
\State $\kappa\gets \operatorname{cond}(A_w)$
\If{$\kappa\le\kappa_{\max}$} \Return $\mathbf m$ \Else\ \Return \textsc{None} \EndIf
\end{algorithmic}
\end{algorithm}

\section{Expected Data Reconstruction Attack on MSS} \label{app:attack_mss}

\paragraph{Setting.}
Let the domain be $[k]=\{0,\dots,k-1\}$.
MSS operates by selecting a random index $j \in [\ell]$ and computing the residue $x \bmod m_j$ for a fixed input $x \in [k]$.  
It then applies the SS mechanism~\cite{wang2016mutual} over the domain $[m_j]$ to perturb the residue: the true value is included in the output set $Z \subset [m_j]$ of size $\omega_j = \lfloor m_j / (e^\varepsilon + 1) \rceil$ with probability
\[
p_j = \frac{\omega_j e^{\varepsilon}}{\omega_j e^{\varepsilon} + m_j - \omega_j},
\]
and the remaining $\omega_j - 1$ elements are drawn uniformly without replacement from the other $m_j - 1$ residues.  
All elements in $Z$ are therefore equally likely to be the true residue from the attacker's perspective.
The user reports $(j, Z)$ to the server.

We analyze a Bayesian attacker who (i) knows $k, \mathbf m, \varepsilon$, (ii) observes one report $y=(j,Z)$, and (iii) assumes a uniform prior $\Pr[x]=1/k$.

\paragraph{Residue multiplicity.}
Write the Euclidean division of $k$ by $m_j$ as $k = e_j m_j + r_j$ with
\[
e_j = \left\lfloor \frac{k}{m_j} \right\rfloor, \qquad 0 \le r_j < m_j.
\]
The number of domain values mapping to each residue is:
\begin{equation}\label{eq:DRA-multiplicity}
  n_{j,z}\;=\;
  \bigl|\{x\in[k]:x\bmod m_j=z\}\bigr|
  \;=\;
  \begin{cases}
    e_j+1,&z<r_j,\\
    e_j,  &z\ge r_j.
  \end{cases}
\end{equation}
Hence $\sum_{z=0}^{m_j-1} n_{j,z}=k$.

\paragraph{Posterior support for a fixed report.}
Given a report $y=(j,Z)$, the posterior support of $x$ is
\[
\mathcal S_{j,Z} = \{x \in [k] : x \bmod m_j \in Z\}.
\]
If the true residue $r$ is not in $Z$, the attacker fails.  
Otherwise, the subset takes the form $Z = \{r\} \cup S$, where $S \subset [m_j] \setminus \{r\}$ and $|S| = \omega_j - 1$.  
The support size is
\begin{equation}\label{eq:DRA-support-size}
  |\mathcal S_{j,Z}| = n_{j,r} + \underbrace{\sum_{u \in S} n_{j,u}}_{=:T_{j,r}}.
\end{equation}

Because the attacker guesses uniformly from the posterior support, its success probability is the reciprocal of this random support size.

\paragraph{Conditional success for residue $r=z$.}
Fix block $j$ and suppose the true residue is $r = z$.
The filler set $S$ is drawn uniformly without replacement from the remaining residues, which makes the filler weight random:
\begin{equation}\label{eq:DRA-T-def}
  T_{j,z} = \sum_{u \in S} n_{j,u}.
\end{equation}
The expectation below is taken over the randomness of the filler set $S$.
Thus, using Eq.~\eqref{eq:DRA-support-size}, the success probability is:
\begin{equation}\label{eq:DRA-cond}
  \Pr[\hat x = x \mid J = j, r = z]
  = p_j \cdot \mathbf E\left[ \frac{1}{n_{j,z} + T_{j,z}} \right].
\end{equation}

\paragraph{Expected accuracy for one block.}
Weighting Eq.~\eqref{eq:DRA-cond} by the marginal probability $\Pr[r = z \mid J = j] = n_{j,z}/k$ yields:
\begin{equation}\label{eq:DRA-block}
  \mathrm{DRA}_j
  =
  p_j
  \sum_{z=0}^{m_j-1}
    \frac{n_{j,z}}{k}
    \cdot \mathbf E\left[ \frac{1}{n_{j,z} + T_{j,z}} \right].
\end{equation}

\paragraph{Global expected accuracy.}
Averaging over the uniformly chosen block index $J \in [\ell]$ gives the total expected DRA success rate:
\begin{equation}\label{eq:DRA-exact}
  \boxed{%
    \mathbf E[\mathrm{DRA}]_{\textsf{MSS}}
    =
    \frac{1}{\ell}
    \sum_{j=0}^{\ell-1}
      p_j
      \sum_{z=0}^{m_j-1}
        \frac{n_{j,z}}{k}
        \cdot \mathbf E\left[ \frac{1}{n_{j,z} + T_{j,z}} \right]
  }
\end{equation}

\paragraph{Special cases and upper bound.}
\begin{enumerate}[label=(\alph*)]
\item \textbf{Equal-size residues} ($m_j \mid k$).  
      In this case, $n_{j,z} = k/m_j$ and $T_{j,z}$ is deterministic, so
      \[
      \mathrm{DRA}_j = \frac{p_j}{\omega_j \cdot \lceil k / m_j \rceil}.
      \]
      The bound in Eq.~\eqref{eq:mss_asr} is tight.

\item \textbf{Singleton subsets} ($\omega_j = 1$).  
      Then $T_{j,z} = 0$ and
      \[
      \mathbf E[\mathrm{DRA}]_{\textsf{MSS}} = \frac{1}{\ell k} \sum_j p_j \cdot \min(m_j, k).
      \]

\item \textbf{Upper-bound in the main text.}  
      Using Jensen’s inequality with the convex function $x \mapsto 1/x$, we obtain
      \[
      \mathbf E\left[ \frac{1}{n_{j,z} + T_{j,z}} \right] \ge \frac{1}{\omega_j \cdot \lceil k / m_j \rceil},
      \]
      \noindent which turns Eq.~\eqref{eq:DRA-block} into the concise upper bound reported in Section~\ref{sub:mss_attack}.
\end{enumerate}

\noindent
Eq.~\eqref{eq:DRA-exact} therefore provides the exact expected single-report success rate for the Bayesian attacker under a uniform prior and uniform guessing over the posterior support.  
The main paper Eq.~\eqref{eq:mss_asr} serves as a conservative, closed-form upper bound.

\section{Expected Data Reconstruction Attack on PGR}
\label{app:attack_pgr}

\paragraph{Setting.}
Let the domain be $[k] = \{0,\dots,k - 1\}$.
The \pgr{} (PGR)~\cite{feldman2022} mechanism embeds $[k]$ into the projective space
\[
K = \frac{q^t - 1}{q - 1},
\]
where $q$ is a prime power and $t$ is the smallest integer such that $K \ge k$.
Each element $x \in [k]$ is mapped to a canonical projective vector $v_x \in \mathbb{F}_q^t$.
For each $x$, the \emph{preferred set} $S(x)$ is the set of projective points $y \in [K]$ whose canonicalized vectors are orthogonal to $v_x$:
\[
S(x) = \{\, y \in [K] : \langle v_x, v_y \rangle_q = 0 \,\}.
\]
All preferred sets have the same size
\[
c_{\mathrm{set}} = \frac{q^{t-1} - 1}{q - 1}.
\]

Under privacy parameter $\varepsilon$, the mechanism outputs $Y=y$ with probabilities
\[
\Pr[Y = y \mid X = x] =
\begin{cases}
e^{\varepsilon} p, & y \in S(x), \\[1mm]
p,                 & y \notin S(x),
\end{cases}
\qquad
p = \dfrac{1}{(e^\varepsilon - 1)c_{\mathrm{set}} + K}.
\]
We analyze a Bayesian attacker assuming a uniform prior $\Pr[X=x] = 1/k$ and observing a single report $Y=y$.

\paragraph{Posterior support for a fixed report.}
For a fixed message $y \in [K]$, define the set of consistent inputs
\[
A(y) = \{ \, x \in [k] : y \in S(x) \,\}.
\]
If $A(y) = \varnothing$, then $\Pr[X = x \mid Y = y] = 1/k$ for all $x$.
If $A(y) \neq \varnothing$, Bayes' rule gives
\[
\Pr[X = x \mid Y = y]
=
\begin{cases}
\alpha_y, & x \in A(y), \\[1mm]
\beta_y,  & x \notin A(y),
\end{cases}
\]
where
\[
\alpha_y
=
\dfrac{e^\varepsilon}{k + (e^\varepsilon - 1)\,|A(y)|},
\qquad
\beta_y
=
\dfrac{1}{k + (e^\varepsilon - 1)\,|A(y)|}.
\]

The Bayes-optimal single-message attacker succeeds with
\begin{equation}
\label{eq:pgr_conditional_asr}
\Pr[\hat X = X \mid Y = y]
=
\begin{cases}
\frac{1}{k}, &
|A(y)| = 0,\\[2mm]
\displaystyle
\frac{e^\varepsilon}
     {k + (e^\varepsilon - 1)\,|A(y)|}, &
|A(y)| > 0.
\end{cases}
\end{equation}

\paragraph{Distribution of messages.}
By symmetry of PGR and the uniform prior, all messages are equally likely:
\[
\Pr[Y = y] = \frac{1}{K} \qquad \forall y \in [K].
\]

\paragraph{Exact expected DRA accuracy.}
Let
\[
N_{\mathrm{pref}}
=
|\{\, y \in [K] : |A(y)| > 0 \,\}|
\]
be the number of messages that have at least one preferred input in $[k]$.
Then
\begin{equation}
\label{eq:pgr_exact_asr}
\boxed{
\mathbf{E}[\mathrm{DRA}]_{\mathrm{PGR}}
=
\frac{1}{K}
\left(
\sum_{y : |A(y)| > 0}
\frac{e^\varepsilon}{k + (e^\varepsilon - 1)\,|A(y)|}
\;+\;
\frac{K - N_{\mathrm{pref}}}{k}
\right)
}
\end{equation}
where $|A(y)| = |\{\, x < k : \langle v_x, v_y \rangle_q = 0 \,\}|$.

\paragraph{Special case: full projective domain.}
When $k = K$, projective symmetry implies $|A(y)| = c_{\mathrm{set}}$ for all messages $y$, yielding
\begin{equation}
\label{eq:pgr_closed_form}
\boxed{
\mathbf{E}[\mathrm{DRA}]_{\mathrm{PGR(full)}}
=
\frac{e^\varepsilon}
     {K + (e^\varepsilon - 1)c_{\mathrm{set}}}.
}
\end{equation}
This closed form applies only when the domain is \emph{not} truncated ($k = K$).
For $k < K$, Eq.~\eqref{eq:pgr_exact_asr} must be used.

\section{Complementary Results} \label{app:add_results}

\paragraph{Ablation: Analytical vs. Empirical MSE.}
Fig.~\ref{fig:mse_theory_empirical} compares the analytical MSE derived for both SS and MSS against their empirical counterparts, under both Zipf and Spike distributions for $k \in \{1{,}024, 22{,}000\}$. 
The analytical expressions closely match the empirical observations across all regimes of $\varepsilon$, consistently tracking the empirical trends. 
This validates the tightness and reliability of our closed-form MSE derivation for MSS (Eq.~\eqref{eq:mss-mse-general}) in practice, regardless of the underlying data distribution or domain size. 
The results provide strong evidence that our theoretical analysis generalizes well and can be used for practical design decisions without the need for repeated empirical tuning.

\begin{figure*}[!htb]
  \centering
  \begin{subfigure}{0.98\linewidth}
    \centering
    \includegraphics[width=\linewidth]{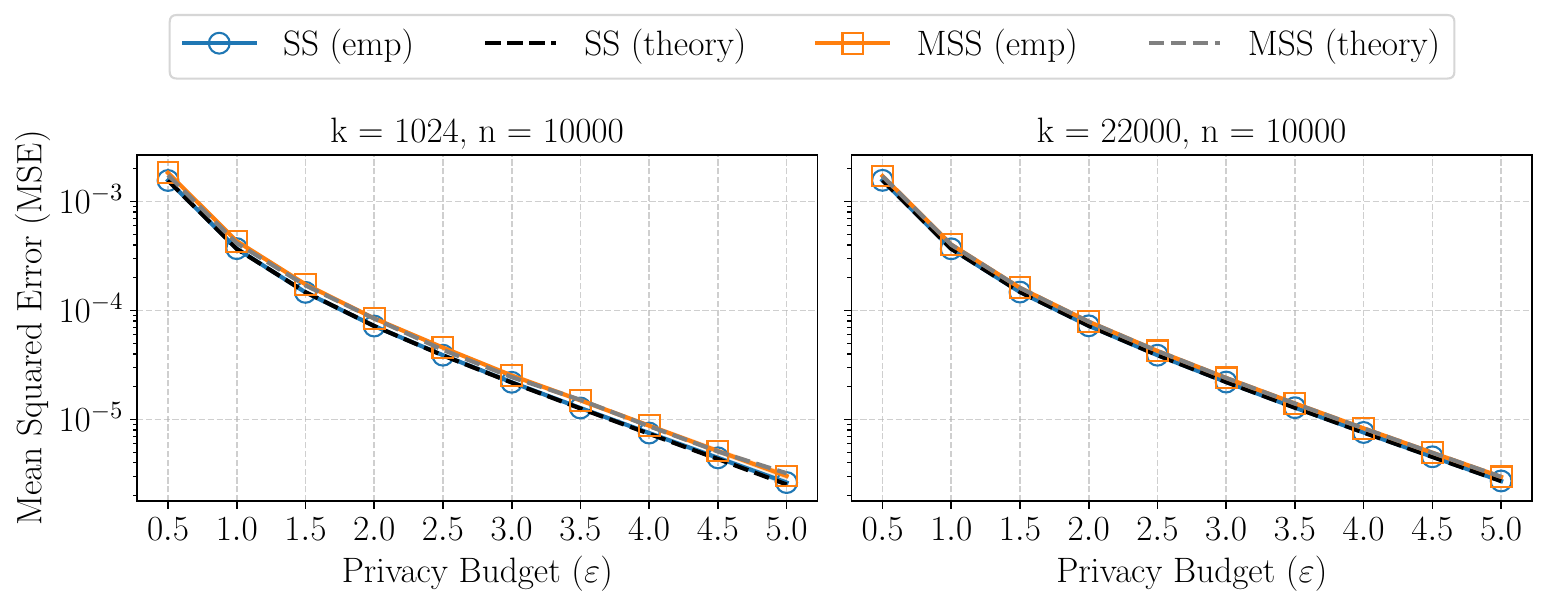}
    \caption{Zipf distribution ($s=3$).}
    \label{fig:mse_theory_zipf}
  \end{subfigure}\\
  \hfill
  \begin{subfigure}{0.98\linewidth}
    \centering
    \includegraphics[width=\linewidth]{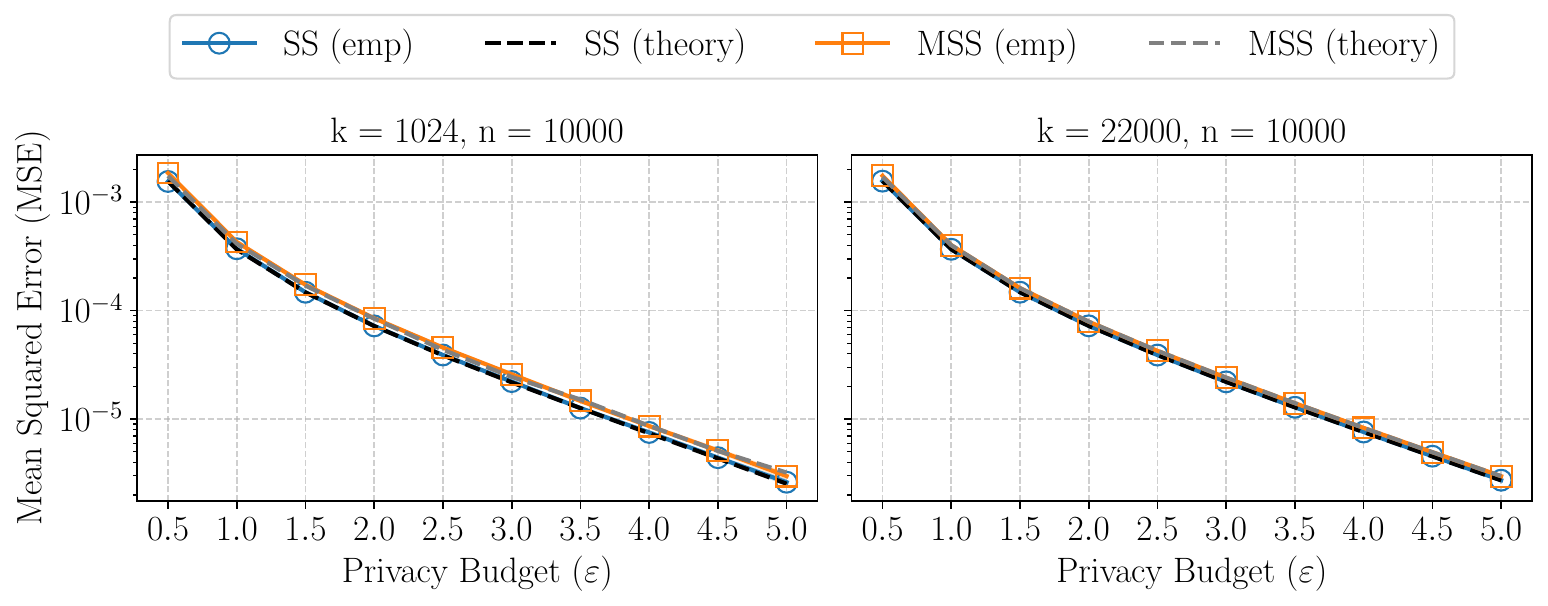}
    \caption{Spike distribution.}
    \label{fig:mse_theory_spike}
  \end{subfigure}
  \caption{
    Comparison between analytical and empirical MSE for SS and MSS protocols, across a range of privacy budgets $\varepsilon$, for $k = 1{,}024$ and $k = 22{,}000$ under both Zipf and Spike distributions.
    Each empirical MSE is averaged over 300 runs, while analytical curves are computed in closed-form expressions.
  }
  \label{fig:mse_theory_empirical}
\end{figure*}

\paragraph{Ablation: Sensitivity to $\ell$ and MSS[OPT].}
Fig.~\ref{fig:mss_ell_ablation} presents an ablation study analyzing how the performance of the MSS protocol varies with fixed numbers of moduli $\ell \in \{3, 6, 9, 12, 15\}$, compared to the analytically optimized MSS[OPT].
We observe that the relationship between $\ell$, utility (MSE), and communication cost (bit cost) is non-linear: in some regimes, intermediate values such as $\ell=9$ can outperform both smaller ($\ell=3$) and larger ($\ell=12$) settings in terms of accuracy and communication efficiency.
This highlights the complex trade-offs induced by modular encoding, where increasing $\ell$ does not guarantee monotonic improvements.
Notably, MSS[OPT] consistently selects a configuration that achieves near-optimal utility across all privacy budgets, confirming the effectiveness of our moduli selection strategy.
It is important to emphasize that our optimization procedure (Section~\ref{sub:mss_moduli_optimization} and Appendix~\ref{app:moduli_selection}) currently targets minimizing analytical MSE only.
This objective implicitly balances communication and robustness in many settings, but it is not guaranteed to yield optimal trade-offs across all criteria.
Future work could extend this framework to support multi-objective optimization, for instance, incorporating additional metrics such as bit cost or attackability, enabling more fine-grained control over privacy-utility-efficiency trade-offs.

\begin{figure*}[!htb]
  \centering
  \includegraphics[width=\linewidth]{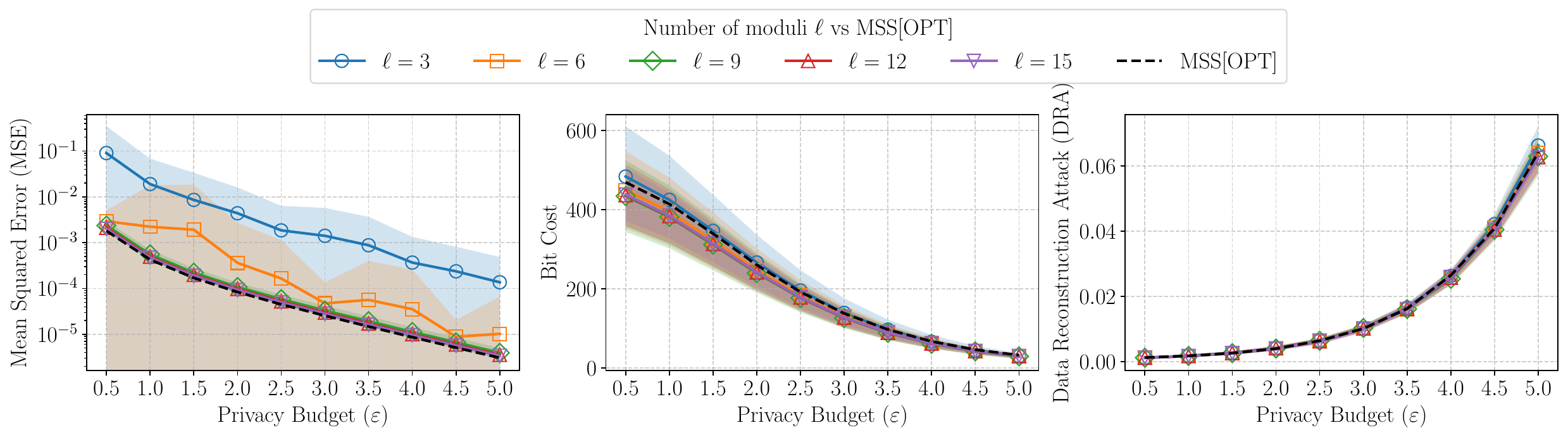}
  \caption{
    Ablation study showing the impact of the number of moduli $\ell \in \{3, 6, 9, 12, 15\}$ on the utility (left), communication cost (middle), and attackability (right) of the MSS protocol.
    The dashed black curve represents the performance of our \mss{} protocol (\ie, MSS[OPT]), which automatically selects $\ell$ and the moduli via our analytical optimization procedure.
    Results are averaged over 300 runs for $k=1{,}024$, under the Zipf distribution.
  }
  \label{fig:mss_ell_ablation}
\end{figure*}

\paragraph{Ablation: Empirical vs. Analytical DRA.}
To evaluate the tightness of our analytical DRA derivations (Section~\ref{sub:mss_attack} and Appendix~\ref{app:attack_mss}), we compare theoretical and empirical data reconstruction attack (DRA) for both SS and MSS across privacy budgets.
As shown in Fig.\ref{fig:asr_theory_vs_empirical}, SS exhibits a near-perfect match between analytical and empirical DRA.
For MSS, however, the analytical DRA bound consistently overestimates the true attackability.
This gap is theoretically expected: the analytical expression used in the main paper is a conservative upper bound derived using Jensen’s inequality over the random support size of the modular posterior (see Appendix~\ref{app:attack_mss}).
The underlying randomness of the filler set in each residue block, and the multiplicity of values per residue, makes the exact computation of MSS’s DRA more intricate, resulting in a looser but guaranteed-safe upper bound.
These results show that while the MSS DRA bound is not tight, it still provides a safe analytical proxy and reinforces that MSS offers strong practical protection against reconstruction attacks.

\begin{figure*}[!htb]
  \centering
  \begin{subfigure}{0.49\linewidth}
    \centering
    \includegraphics[width=\linewidth]{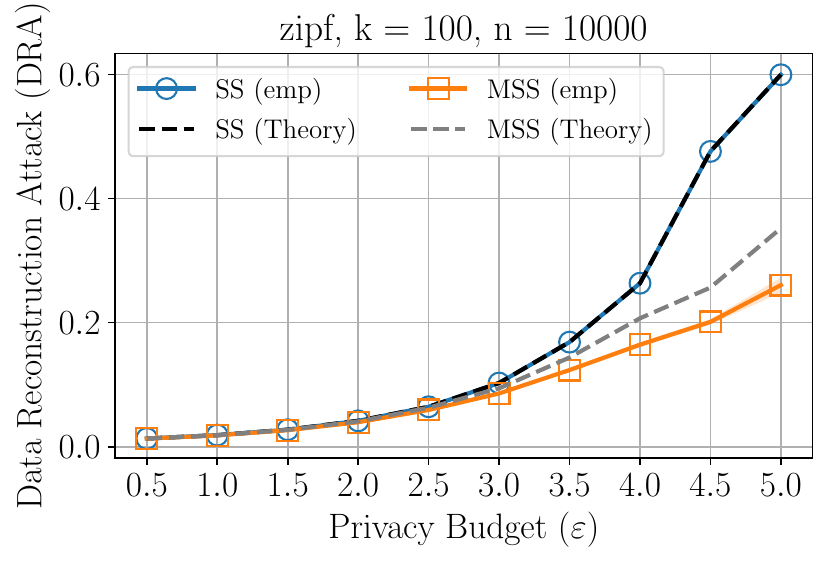}
    \caption{}
    \label{fig:asr_theory_zipf_k100}
  \end{subfigure}
  \hfill
  \begin{subfigure}{0.49\linewidth}
    \centering
    \includegraphics[width=\linewidth]{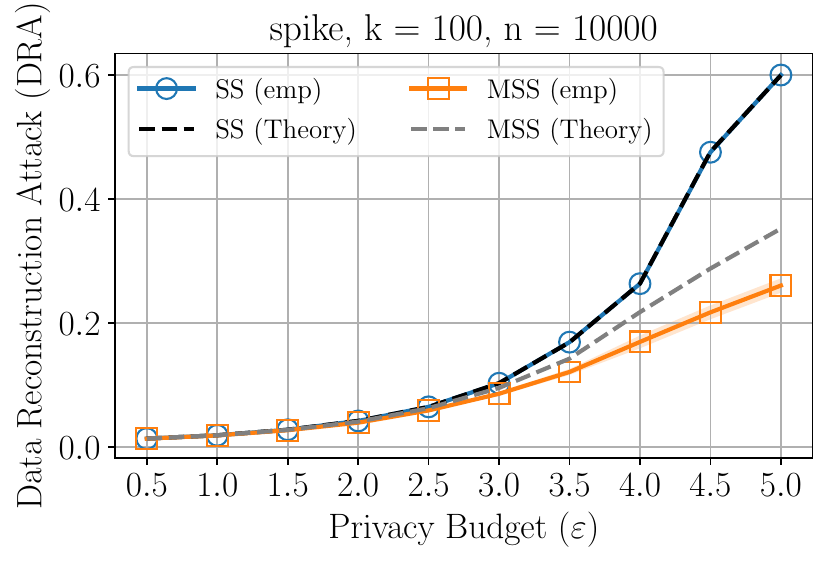}
    \caption{}
    \label{fig:asr_theory_spike_k100}
  \end{subfigure}
  \medskip
  \begin{subfigure}{0.49\linewidth}
    \centering
    \includegraphics[width=\linewidth]{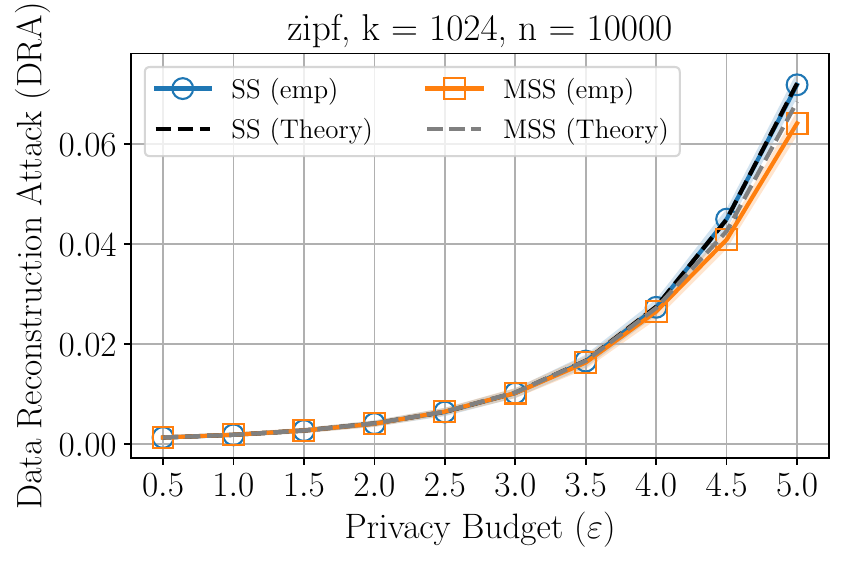}
    \caption{}
    \label{fig:asr_theory_zipf_k1024}
  \end{subfigure}
  \hfill
  \begin{subfigure}{0.49\linewidth}
    \centering
    \includegraphics[width=\linewidth]{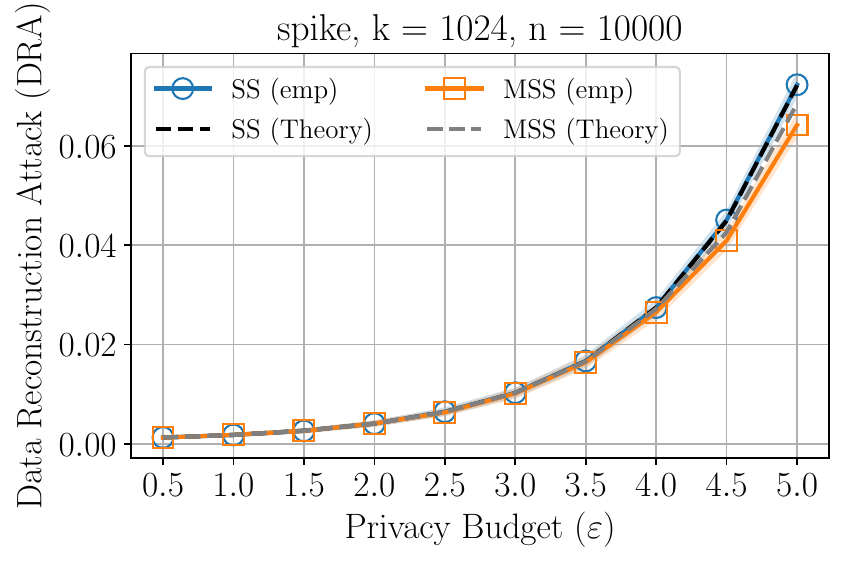}
    \caption{}
    \label{fig:asr_theory_spike_k1024}
  \end{subfigure}
  \caption{
  Empirical vs. analytical data reconstruction attack (DRA) under Zipf and Spike distributions, for both small and large domains.
  While SS closely matches its analytical DRA, MSS shows a consistent gap, confirming that the bound is conservative.
  }
  \label{fig:asr_theory_vs_empirical}
\end{figure*}

\paragraph{Additional Results: Utility comparison.} To complement the results reported in the main paper (Fig.~\ref{fig:mse_zipf_spike}), we include additional empirical analyses in Fig.~\ref{fig:error_distribution_complementary}. 
This figure presents CDFs of estimation error for both Zipf and Spike distributions, alongside extended MSE \emph{vs} $\varepsilon$ plots under various settings.
These additional plots confirm and extend the conclusions drawn in the main paper. 
The CDFs in subfigures (a)--(d) show that MSS consistently yields estimation errors close to SS and PGR, with low variance across seeds and strong robustness to the underlying data distribution (Zipf or Spike). 
Subfigures (e)--(h) further illustrate that the utility trends reported in the main text also hold at $k=1{,}024$, validating the generality of our findings. 
Across all configurations, GRR maintains the highest error. 
MSS remains competitive against the best protocols (OUE, SS, and PGR) in terms of MSE.

\begin{figure*}[!htb]
  \centering

  % --- CDFs for Zipf and Spike (k = 1024 & 22000) ---
  \begin{subfigure}{0.49\linewidth}
    \centering
    \includegraphics[width=\linewidth]{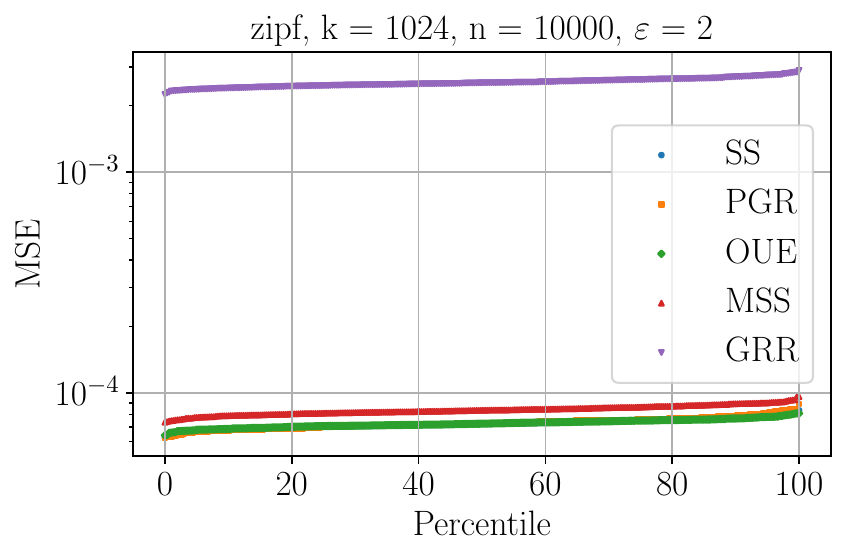}
    \caption{CDF under Zipf ($s=3$), $k = 1{,}024$.}
  \end{subfigure}
  \hfill
  \begin{subfigure}{0.49\linewidth}
    \centering
    \includegraphics[width=\linewidth]{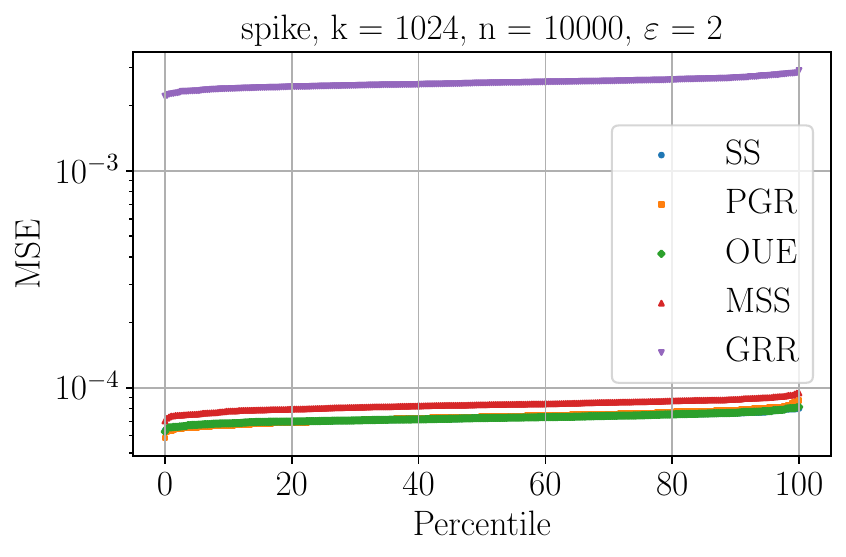}
    \caption{CDF under Spike, $k = 1{,}024$.}
  \end{subfigure}

  \begin{subfigure}{0.49\linewidth}
    \centering
    \includegraphics[width=\linewidth]{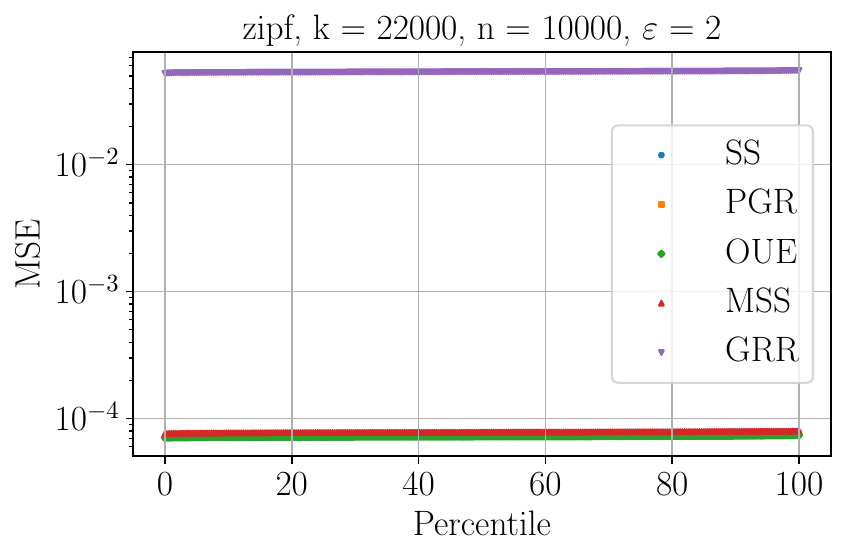}
    \caption{CDF under Zipf ($s=3$), $k = 22{,}000$.}
  \end{subfigure}
  \hfill
  \begin{subfigure}{0.49\linewidth}
    \centering
    \includegraphics[width=\linewidth]{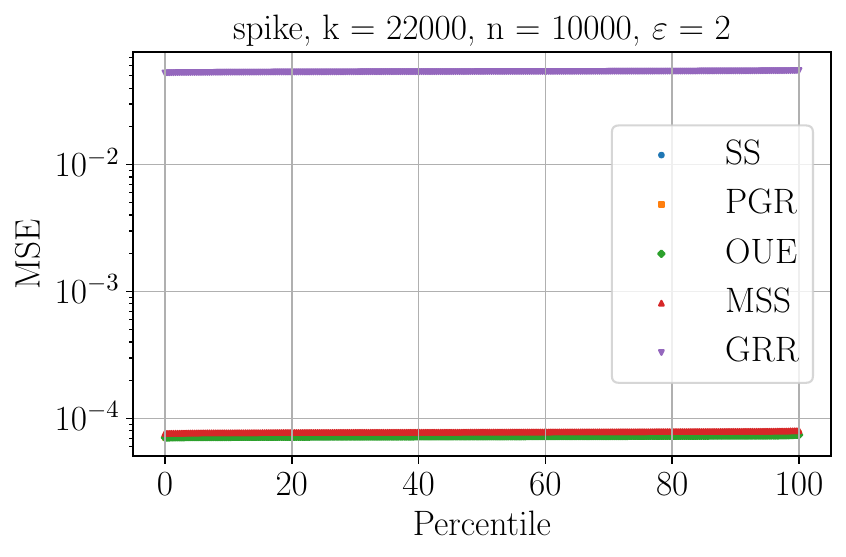}
    \caption{CDF under Spike, $k = 22{,}000$.}
  \end{subfigure}

  % --- MSE vs epsilon under Zipf (k = 1024) ---
  \begin{subfigure}{0.49\linewidth}
    \centering
    \includegraphics[width=\linewidth]{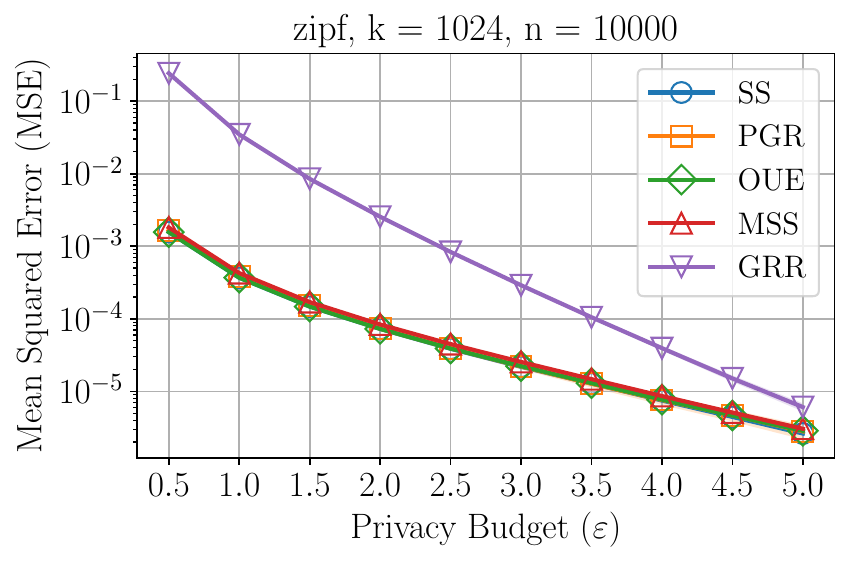}
    \caption{MSE vs $\varepsilon$ under Zipf ($s=3$), $k = 1{,}024$.}
  \end{subfigure}
  \hfill
  \begin{subfigure}{0.49\linewidth}
    \centering
    \includegraphics[width=\linewidth]{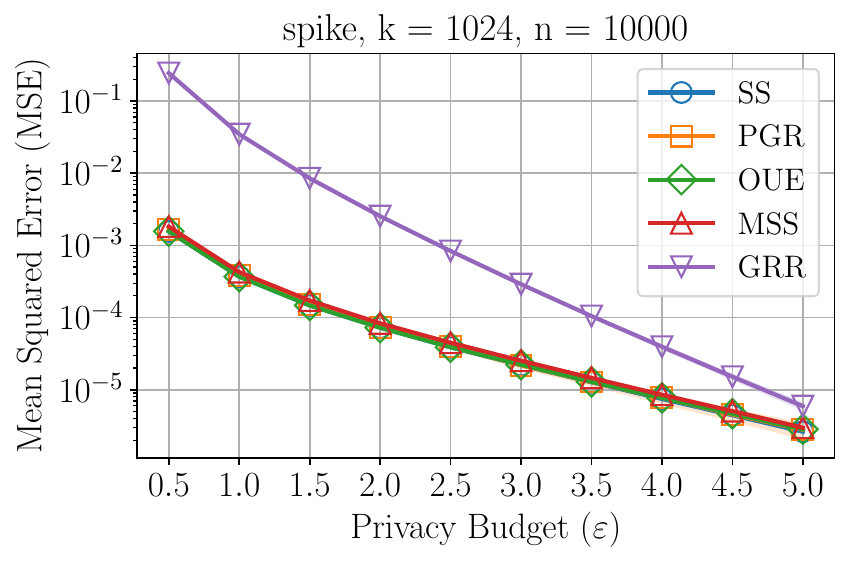}
    \caption{MSE vs $\varepsilon$ under Spike, $k = 1{,}024$.}
  \end{subfigure}

  \caption{
    \textbf{Error distribution from experiments.} 
    Subfigures (a)--(d) show the CDFs of the estimation error (MSE) under 300 runs for Zipf and Spike distributions at two domain sizes. 
    Subfigures (e)--(f) show the variation of MSE with $\varepsilon$ under Zipf and Spike distributions for $k = 1{,}024$.
  }
  \label{fig:error_distribution_complementary}
\end{figure*}

\paragraph{Additional Results: Attackability.}
To complement our Zipf-based results in the main paper (Fig.~\ref{fig:asr_zipf}), we evaluate the empirical attackability of each protocol under the Spike distribution.
As shown in Fig.~\ref{fig:asr_spike}, the trends are consistent with those observed for Zipf: MSS maintains the lowest data reconstruction attack (DRA) across all privacy budgets $\varepsilon$ and domain sizes.
This consistency confirms that the robustness of MSS to data reconstruction attacks is largely independent of the input distribution.
In contrast, protocols like GRR and SS remain more susceptible to attack due to their direct encoding of input values.
These results reinforce that MSS's modular design offers strong defense against single-message attacks, regardless of how the data is distributed.

While these trends hold generally, it is important to highlight a distinctive behavior of PGR: unlike other mechanisms, PGR's internal domain size is determined by the projective geometry induced by the privacy budget $K(\varepsilon) = (q^t - 1)/(q - 1)$ with $q \approx e^{\varepsilon}+1$.
To isolate the impact of this dependence, we perform an additional ablation (Fig.~\ref{fig:asr_pgr_aligned}) in which, for each privacy level $\varepsilon$, we set the evaluation domain to the corresponding projective size $K(\varepsilon)$ and run \emph{all} protocols on this shared domain. 
We repeat this ablation using $\mathrm{base\_k}=100$ and $1{,}024$, to confirm that the phenomenon is consistent across both small and large domains.
This geometry-aligned setting removes the truncation mismatch that affects PGR when $k < K(\varepsilon)$ and places all baselines in the domain naturally required by PGR.
As shown in Fig.~\ref{fig:asr_pgr_aligned}, once this alignment is enforced, PGR’s attackability no longer exhibits the sharp rise seen in Figs.~\ref{fig:asr_zipf} and~\ref{fig:asr_spike}, while MSS continues to offer the lowest DRA across all~$\varepsilon$.

\begin{figure*}[!htb]
  \centering
  \begin{subfigure}{0.49\linewidth}
    \centering
    \includegraphics[width=\linewidth]{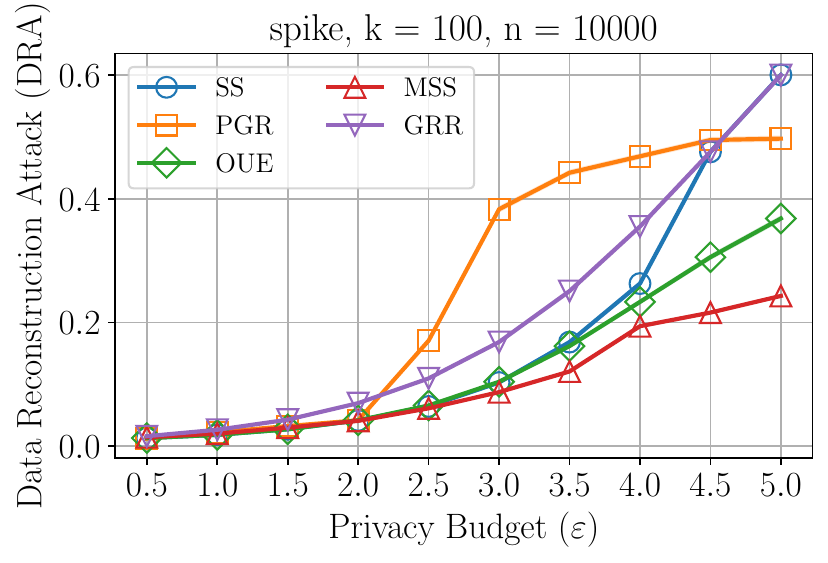}
    \caption{Spike, $k = 100$.}
    \label{fig:asr_ablation_spike_k_100}
  \end{subfigure}
  \hfill
  \begin{subfigure}{0.49\linewidth}
    \centering
    \includegraphics[width=\linewidth]{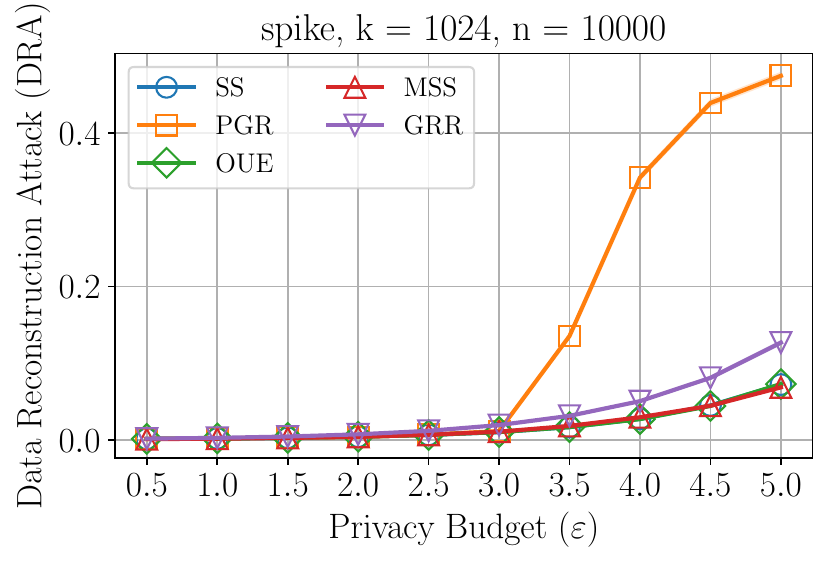}
    \caption{Spike, $k = 1{,}024$.}
    \label{fig:asr_ablation_spike_k_1024}
  \end{subfigure}
  \caption{
  Empirical Data Reconstruction Attack (DRA) of each protocol under the Spike distribution, evaluated over $n = 10{,}000$ users.
  As in the Zipf setting, MSS remains the most resistant to attack across all privacy levels.
  }
  \label{fig:asr_spike}
\end{figure*}

\begin{figure*}[!htb]
  \centering
  \begin{subfigure}{0.49\linewidth}
    \centering
    \includegraphics[width=\linewidth]{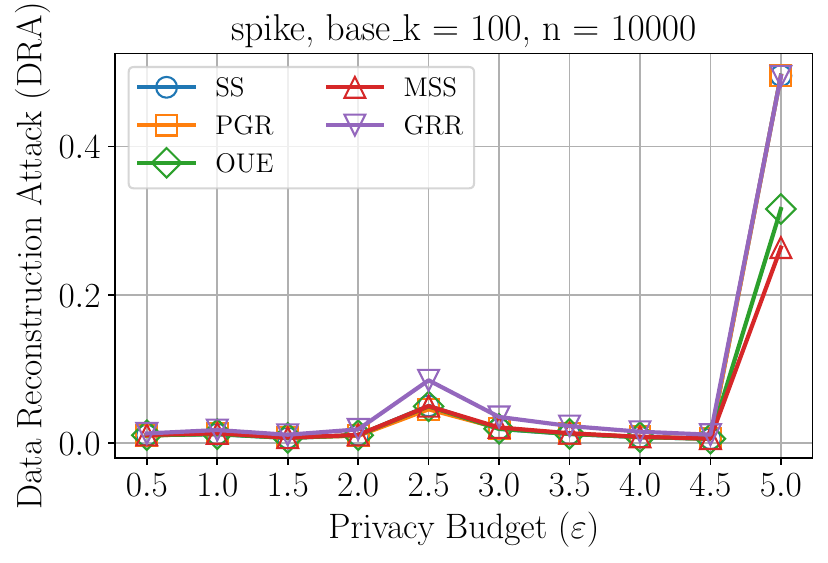}
    \caption{Spike, $\mathrm{base\_k} = 100$.}
    \label{fig:asr_spike_k_100}
  \end{subfigure}
  \hfill
  \begin{subfigure}{0.49\linewidth}
    \centering
    \includegraphics[width=\linewidth]{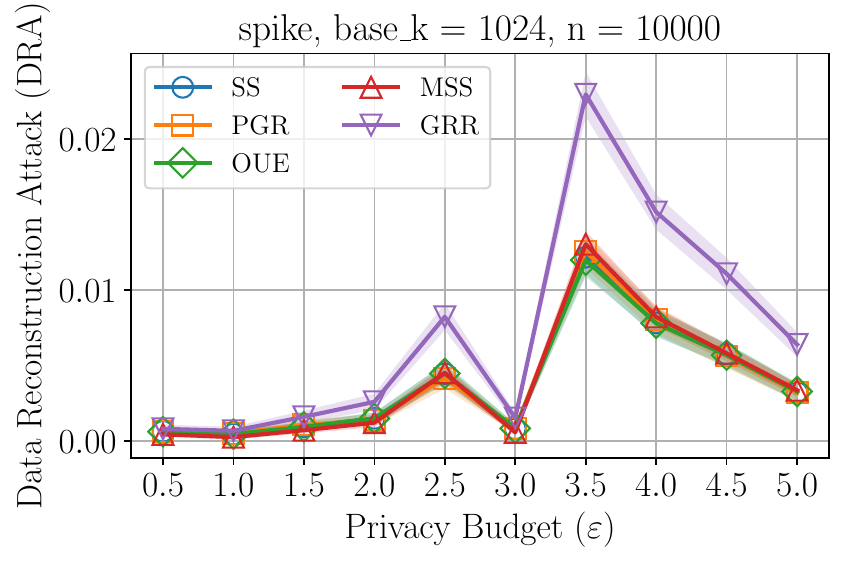}
    \caption{Spike, $\mathrm{base\_k} = 1{,}024$.}
    \label{fig:asr_spike_k_1024}
  \end{subfigure}
  \caption{
    Geometry-aligned ablation for PGR. 
    For each privacy level $\varepsilon$, the evaluation domain is set to the projective size $K(\varepsilon)$ induced by PGR, computed using $\mathrm{base\_k}=100$ and $1{,}024$. 
    All protocols are evaluated on this shared domain to remove the truncation mismatch that occurs when $k < K(\varepsilon)$.
    Under this aligned setting, PGR no longer exhibits the high-$\varepsilon$ spike observed in Figs.~\ref{fig:asr_zipf} and~\ref{fig:asr_spike}, while MSS remains the most resistant to reconstruction attacks.
    }
  \label{fig:asr_pgr_aligned}
\end{figure*}

\end{document}